\newtheorem{theorem}{Theorem}
\newtheorem{lemma}{Lemma}
\newtheorem{proposition}{Proposition}
\newtheorem{observation}{Observation}
\newtheorem{definition}{Definition}
\newtheorem{claim}{Claim}
\newtheorem{conjecture}{Conjecture}
\tikzset{vertex/.style={circle,draw=black,fill=none,thin,node distance=0.5}}
\tikzset{edge/.style={thick}}
\algnewcommand{\LComment}[1]{\Statex \(\triangleright\) #1}
\title{\Large Graph connectivity in log steps using label propagation}
\author{Paul Burkhardt\footnote{Research Directorate, National Security Agency,
   Fort~Meade, MD 20755. Email: pburkha@nsa.gov}}
\date{December 07, 2021}
\begin{document}

\maketitle
\thispagestyle{empty}
\begin{abstract}
  The fastest deterministic algorithms for connected components take logarithmic
  time and perform superlinear work on a Parallel Random Access Machine
  (PRAM). These algorithms maintain a spanning forest by merging and compressing
  trees, which requires pointer-chasing operations that increase memory access
  latency and are limited to shared-memory systems. Many of these PRAM
  algorithms are also very complicated to implement. Another popular method is
  ``leader-contraction'' where the challenge is to select a constant fraction of
  leaders that are adjacent to a constant fraction of non-leaders with high
  probability, but this can require adding more edges than were in the original
  graph. Instead we investigate label propagation because it is deterministic,
  easy to implement, and does not rely on pointer-chasing. Label propagation
  exchanges representative labels within a component using simple graph
  traversal, but it is inherently difficult to complete in a sublinear number of
  steps. We are able to overcome the problems with label propagation for graph
  connectivity.

  We introduce a surprisingly simple framework for deterministic, undirected
  graph connectivity using label propagation that is easily adaptable to many
  computational models. It achieves logarithmic convergence independently of the
  number of processors and without increasing the edge count. We employ a novel
  method of propagating directed edges in alternating direction while performing
  minimum reduction on vertex labels. We present new algorithms in PRAM, Stream,
  and MapReduce. Given a simple, undirected graph $G=(V,E)$ with $n=|V|$
  vertices, $m=|E|$ edges, our approach takes O(m) work each step, but we can
  only prove logarithmic convergence on a path graph. It was conjectured by Liu
  and Tarjan (2019) to take $O(\log n)$ steps or possibly $O(\log^2 n)$
  steps. Our experiments on a range of difficult graphs also suggest logarithmic
  convergence. We leave the proof of convergence as an open problem.

  \textit{Keywords:} graph connectivity, connected components, parallel
  algorithm, pram, stream, mapreduce
\end{abstract}

\section{Introduction}
Given a simple, undirected graph $G=(V,E)$ with $n=|V|$ vertices and $m=|E|$
edges, the connected components of $G$ are partitions of $V$ such that every
pair of vertices are connected by a path, which is a sequence of adjacent edges
in $E$. If two vertices are not connected then they are in different
components. The distance $d(v,u)$ is the shortest-path length between vertices
$v$ and $u$. The diameter $D$ is the maximum distance in $G$. We wish to find
the connected components of $G$ in $O(\log n)$ steps using simple, deterministic
methods that are adaptable to many computational models.

The fastest, deterministic parallel ($\mathcal{NC}$) algorithms for connected
components take logarithmic time and perform superlinear work on a Parallel
Random Access Machine (PRAM). These algorithms maintain a spanning forest by
merging and compressing trees~\cite{bib:iwama1994,
  bib:colevishkin1991,bib:awerbuchshiloach1987, bib:shiloachvishkin1982}, which
requires pointer-chasing operations that increase memory access latency. Pointer
jumping was a primary source of slowdown in a parallel minimum spanning tree
algorithm~\cite{bib:chung_condon1996}. The PRAM implementations are also limited
to shared-memory
systems~\cite{bib:goddard1995,bib:patwary2012,bib:badercong2004}. Another
popular method is ``leader-contraction'' where the challenge is to select a
constant fraction of leaders that are adjacent to a constant fraction of
non-leaders with high
probability~\cite{bib:andoni2018,bib:kiveris2014,bib:rastogi2013,bib:karger1999}.
Not only is this method randomized but it can require adding many more edges to
the graph. Instead we investigate label propagation because it is deterministic,
easy to implement, and does not rely on pointer-chasing. Label propagation
exchanges representative labels within a component using simple graph traversal,
but it is inherently difficult to complete in a sublinear number of
steps~\cite{bib:rosenfeldpfaltz1966,bib:samet1988,bib:shapiro1996}. Adding and
removing edges must be carefully managed to keep the edge count, and therefore
the work, linear in each step. We are able to overcome the problems with label
propagation for graph connectivity.

We introduce a surprisingly simple framework for deterministic, undirected graph
connectivity using label propagation that is easily adaptable to many
computational models. It achieves logarithmic convergence independently of the
number of processors and without increasing the edge count. We employ a novel
method of propagating directed edges in alternating direction while performing
minimum reduction on vertex labels. We believe our solution to obtaining
sublinear convergence and near optimal work for connected components is one of
the simplest to date. Moreover, our experiments demonstrate fast convergence.

In this paper we say that a $(v,u)$ edge is directed from $v$ to $u$ and
$(\cdot, \cdot)$ denotes an ordered pair that distinguishes $(v,u)$ from
$(u,v)$. We call the counter-oriented $(v,u),(u,v)$ edges the \emph{twins} of a
conjugate pair. An undirected $\{u,v\}$ edge in $G$ is then comprised of these
twins. To propagate a label $w$ from edge $(v,u)$ we create just the $(u,w)$
twin. In the next step we reverse the direction to return the opposite twin,
$(w,u)$, if $w$ is the minimum label for $u$. We'll call these two edge
operations \emph{label propagation} and \emph{symmetrization},
respectively. Concomitant with label propagation is a \emph{min update} on $u$'s
minimum label, which may or may not change. Contraction of the graph is due to
the \emph{label propagation} operation, and \emph{symmetrization} ensures that a
vertex and its minimum label are able to exchange new minimum labels. Thus every
vertex in each step propagates and retains its current minimum label. Let $l(v)$
be the current minimum label for $v$. Then for an edge $(v,u)$ we get $(u,l(v))$
or $(u,v)$ in a single step, due to either \emph{label propagation} or
\emph{symmetrization}, respectively. Each edge is replaced by a new edge and
hence the method maintains a stable edge count. The essential operations are
summarized as follows.

\begin{itemize}
\item For every $(v,u)$ edge if $u$ is not $l(v)$ then \emph{min update} and
  \emph{label propagation}, else \emph{symmetrization}, repeating until no label
  changes.
\end{itemize}

\noindent
Figure~\ref{fig:method} illustrates our method where the starting minimum label
for each vertex is the lowest vertex ID among its neighbors and itself, and
eventually the graph is transformed into a star whose root is the component
label.

\renewcommand{\thesubfigure}{\arabic{subfigure}}
\begin{figure*}[t]
\captionsetup[subfigure]{position=b}
\small
\centering
\begin{subfigure}[t]{.25\textwidth}
  \centering
  \raisebox{8mm}{
  \begin{tikzpicture}
    \node [vertex,node distance=0.25] (1) {1};
    \node [vertex,node distance=0.25] (3) [right=of 1] {3};
    \node [vertex,node distance=0.25] (4) [right=of 3] {4};
    \node [vertex,node distance=0.25] (2) [right=of 4] {2};
    \draw [edge] (1) to (3);
    \draw [edge] (3) to (4);
    \draw [edge] (4) to (2);
  \end{tikzpicture}
  }
  \caption{}
\end{subfigure}
\begin{subfigure}[t]{.2\textwidth}
  \centering
  \begin{tikzpicture}
    \node [vertex] (3) {3};
    \node [vertex] (1) [above right=of 3] {1};
    \node [vertex] (4) [below right=of 1] {4};
    \node [vertex] (2) [below right=of 3] {2};
    \draw [edge] (3) to (1);
    \draw [edge] (4) to (2);
    \draw [edge,>=latex,->] (3) to (2);
    \draw [edge,>=latex,->] (4) to (1);
  \end{tikzpicture}
  \caption{}
\end{subfigure}
\begin{subfigure}[t]{.2\textwidth}
  \centering
  \begin{tikzpicture}
    \node [vertex] (3) {3};
    \node [vertex] (1) [above right=of 3] {1};
    \node [vertex] (4) [below right=of 1] {4};
    \node [vertex] (2) [below right=of 3] {2};
    \draw [edge] (1) to (3);
    \draw [edge,>=latex,->] (1) to (4);
    \draw [edge,>=latex,->] (4) to (2);
    \draw [edge,>=latex,->] (2.75) to (1.285);
    \draw [edge,>=latex,->] (2.105) to (1.255);
  \end{tikzpicture}
  \caption{}
\end{subfigure}
\begin{subfigure}[t]{.2\textwidth}
  \centering
  \raisebox{4mm}{
  \begin{tikzpicture}
    \node [vertex] (3) {3};
    \node [vertex] (1) [above right=of 3] {1};
    \node [vertex] (4) [below right=of 1] {4};
    \node [vertex] (2) [below=of 1] {2};
    \draw [edge] (3) to (1);
    \draw [edge] (2.75) to (1.285);
    \draw [edge,>=latex,->] (1.255) to (2.105);
    \draw [edge,>=latex,->] (4) to (1);
  \end{tikzpicture}
  }
  \caption{}
\end{subfigure}
\caption{\label{fig:method}A path graph converges in three steps. After each
  step the output graph becomes the input for the next step. Undirected edges
  denote a pair of counter-oriented edges.}
\end{figure*}
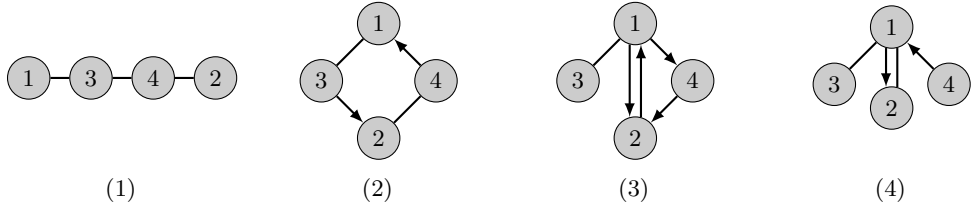
\renewcommand{\thesubfigure}{\alph{subfigure}}

This is a practical and very simple technique for large-scale streaming and
parallel graph connectivity. We present new algorithms in PRAM, Stream, and
MapReduce. Our approach takes $O(m)$ work each step, but we can only prove
logarithmic convergence on a path graph. Despite the simplicity of our
algorithm, the proof of logarithmic convergence is elusive and poses a rather
interesting challenge. We conjecture that our algorithm takes $O(\log n)$ steps
to converge. Our algorithm behaves well and empirically takes $O(\log n)$ steps
on a range of difficult graphs. In 2019 Liu and Tarjan conjectured that an
earlier version of our algorithm takes $O(\log n)$ steps or possibly $O(\log^2
n)$ steps~\cite{bib:liu_tarjan2019}. We leave the proof of convergence as an
open problem.

\section{\label{sec:contribution}Our contribution}
We introduce a simple, deterministic label propagation method for undirected
graph connectivity. Our approach propagates directed edges in alternating
direction to achieve fast convergence independently of the number of processors
while also maintaining $O(m)$ work each step. We present new algorithms in PRAM,
Stream, and MapReduce. We will silently use the standard notation for asymptotic
bounds to provide a familiar basis for comparison, but the reader should keep in
mind that our bounds that depend on the convergence are conjecture only.

If our conjecture of $O(\log n)$ convergence is true, then our label propagation
algorithm on a Concurrent Read Concurrent Write (CRCW) PRAM achieves $O(\log n)$
time and $O(m\log n)$ work with $O(m)$ processors. On an Exclusive Read
Exclusive Write (EREW) PRAM it takes $O(\log^2 n)$ time and $O(m\log^2 n)$
work. In contrast, the fastest deterministic CRCW graph connectivity algorithms
take $O(\log n)$ time and $O((m+n)\cdot \alpha(m,n))$ work using
$O((m+n)\cdot\alpha(m,n)/\log(n))$ processors~\cite{bib:iwama1994,
  bib:colevishkin1991}, where $\alpha(m,n)$ is the inverse Ackerman
function. The best-known $\mathcal{NC}$ EREW
algorithm~\cite{bib:chonghanlam2001} takes $O(\log n)$ time and $O((m+n)\log n)$
work with $O(m+n)$ processors. Although our results are slower than those for
the fastest deterministic CRCW and EREW algorithms, our method is much simpler
and easier to implement. We also give an efficient Stream-Sort algorithm that
takes $O(\log n)$ passes and $O(\log n)$ memory, and a MapReduce algorithm
taking $O(\log n)$ rounds and $O(m\log n)$ communication overall. These would be
the first deterministic $O(\log n)$-step graph connectivity algorithms in Stream
and MapReduce models. For the purposes of this discussion we will assume $O(\log
n)$ convergence holds for our algorithm. With that assumption, refer to
Table~\ref{tab:compare} for a summary of these results and comparison to the
current state-of-the-art.

\begin{table*}
\centering
\caption{\label{tab:compare} Comparison to state-of-the-art. Cost is work,
  memory, or communication for PRAM, Stream-Sort, and MapReduce,
  respectively. (The $O(\log n)$ convergence results for this paper are
  conjecture.)}
\begin{tabular}{l|llll}
  \toprule
  Model & Steps & Cost & Class & Reference \\
  \midrule
  CRCW & $O(\log  n)$ & $O((m+n)\cdot \alpha(m,n))$ & deterministic &
  Cole, Vishkin~\cite{bib:colevishkin1991} \\

  EREW & $O(\log n)$ & $O((m+n)\log n)$ & deterministic &
  Chong, Han, and Lam~\cite{bib:chonghanlam2001} \\

  Stream-Sort & $O(\log n)$ & $O(\log n)$ & randomized &
  Aggarwal et al.~\cite{bib:aggarwal2004} \\

  MapReduce & $O(\log^2 n)$ & $O(m\log^2 n)$ & deterministic &
  Kiveras et al.~\cite{bib:kiveris2014} \\
  \midrule
  CRCW & $O(\log n)$ & $O(m\log n)$ & deterministic & This paper \\
  EREW & $O(\log^2 n)$ & $O(m\log^2 n)$ & deterministic & This paper \\
  Stream-Sort & $O(\log n)$ & $O(\log n)$ & deterministic & This paper \\
  MapReduce & $O(\log n)$ & $O(m\log n)$ & deterministic & This paper  \\
  \bottomrule
\end{tabular}
\end{table*}

The computational models we explored are briefly described in
Section~\ref{sec:model}. We refer the reader to~\cite{bib:campbell1997,
  bib:oconnell2009, bib:mcgregor2014, bib:mapreducesurvey2016} for more complete
descriptions. A survey of related work is given in
Section~\ref{sec:related}. Then in Section~\ref{sec:core} we introduce our
principal algorithm which establishes the framework behind our technique. This
leads to our PRAM results in Section~\ref{sec:pram}. In
Section~\ref{sec:extension} the framework is extended for Stream-Sort and
MapReduce models, which introduces the subject of label duplication in
Section~\ref{sec:duplication} where we identify when pathological duplication of
labels arises and how to address it. We give our Stream-Sort and MapReduce
algorithms in Sections~\ref{sec:stream} and \ref{sec:mapreduce}. Finally we
briefly describe a parallel implementation of our principal algorithm in
Section~\ref{sec:impl} followed by empirical results in Section~\ref{sec:exp}.

\section{\label{sec:model}Computational models}
In a PRAM~\cite{bib:fortune1978} each processor can access any global memory
location in unit time. Processors can read from global memory, perform a
computation, and write a result to global memory in a single clock cycle. All
processors execute these instructions at the same time. A read or write to a
memory location is restricted to one processor at a time in an Exclusive Read
Exclusive Write (EREW) PRAM. Writes to a memory location are restricted to one
processor at a time in a Concurrent Read Exclusive Write (CREW) PRAM. A
Concurrent Read Concurrent Write (CRCW) PRAM permits concurrent read and write
to a memory location by any number of processors, where concurrent writes to the
same memory location are handled by a resolution protocol. A Combining Write
Resolution uses an associative operator to combine all values in a single
instruction. A Combining CRCW employs this to store a reduction of the values,
such as the minimum, in constant time~\cite{bib:schmidt2017}.

The Stream model~\cite{bib:munropaterson1980,bib:henzinger1998} focuses on the
trade-off between working memory space $s$ and number of passes $p$ over the
input stream, allowing the computational time to be unbounded. In
W-Stream~\cite{bib:ruhl2003} an algorithm can write to the stream for subsequent
passes and in Stream-Sort~\cite{bib:aggarwal2004} the input or intermediate
output stream can also be sorted. In both W-Stream and Stream-Sort the output
streams become the input stream in the next pass. In Stream-Sort the streaming
and sorting passes alternate so a Stream-Sort algorithm reads an input stream,
computing on the items in the stream, while writing to an intermediate output
stream that gets reordered for free by a subsequent sorting pass. The streams
are bounded by the starting problem size. An algorithm in Stream-Sort is
efficient if it takes polylogarithmic passes and memory.

The MapReduce model~\cite{bib:mrc2010,bib:goodrich2011,bib:pietracaprina2012}
appeared some years after the programming paradigm was popularized by
Google~\cite{bib:mapreduce2004}. The model employs the functions \emph{map} and
\emph{reduce}, executed in sequence. The input is a set of $\langle key,value
\rangle$ pairs that are ``mapped'' by instances of the \emph{map} function into
a multiset of $\langle key,value \rangle$ pairs. The \emph{map} output pairs are
``reduced'' and also returned as a multiset of $\langle key,value \rangle$ pairs
by instances of the \emph{reduce} function. A single \emph{reduce} instance gets
all values associated with a key. A \emph{round} of computation is a single
sequence of \emph{map} and \emph{reduce} executions where there can be many
instances of \emph{map} and \emph{reduce} functions. Each \emph{map} or
\emph{reduce} function can complete in polynomial time for input $n$. Each
\emph{map} or \emph{reduce} instance is limited to $O(n^{1-\epsilon})$ memory
for a constant $\epsilon > 0$, and an algorithm is allowed $O(n^{2-2\epsilon})$
total memory. The number of machines/processors is bounded to
$O(n^{1-\epsilon})$, but each machine can run more than one instance of a
\emph{map} or \emph{reduce} function.

\section{\label{sec:related}Related work}
The famous 1982 algorithm by Shiloach and Vishkin~\cite{bib:shiloachvishkin1982}
takes $O(\log n)$ time using $O(m+n)$ processors on a CRCW PRAM, performing
$O((m+n)\log n)$ work overall. In 1991 Cole and Vishkin improved the result to
$O(\log n)$ time and $O((m+n)\cdot \alpha(m,n))$ work using
$O((m+n)\cdot\alpha(m,n)/\log(n))$ processors~\cite{bib:colevishkin1991}, but
hides a large constant in the asymptotic bound. The constant was reduced by
Iwama and Kambayashi in 1994~\cite{bib:iwama1994}. But these latter CRCW
algorithms are very complicated and difficult to translate to other
computational models because of the pointer operations. Our algorithm takes
$O(\log n)$ time using $O(m)$ processors, albeit on a more powerful CRCW. But it
is more amenable to other models because it does not rely on pointer-jumping.

The fastest deterministic EREW algorithm takes $O(\log n)$ time using $O(m+n)$
processors and is due to the 2001 work by Chong, Han, and
Lam~\cite{bib:chonghanlam2001}. This algorithm relies on carefully merging
adjacency lists, and improves the earlier 1995 result by Chong and Lam, which
took $O(\log n \log \log n)$ time. These EREW algorithm require parallel sorting
and pointer jumping. Our EREW algorithm is slower, taking $O(\log^2 n)$ time and
$O(m\log^2 n)$ work using $O(m)$ processors, but doesn't rely on pointer jumping
or sorting and is far simpler to implement.
 
The best known deterministic Stream algorithm for connected components is given
by Demetrescu, Finocchi, and Ribichini~\cite{bib:demetrescu2009}, taking
$O((n\log n)/s)$ passes and $s$ working memory size in W-Stream. Their algorithm
can only achieve $O(\log n)$ passes using $s=O(n)$ memory. A randomized
\emph{s-t}-connectivity algorithm by Aggarwal et al.~\cite{bib:aggarwal2004}
takes $O(\log n)$ passes and memory in Stream-Sort. It can be modified to
compute connected components with the same bounds, but requires sorting in three
of four steps in each pass~\cite{bib:oconnell2009}. In contrast, our Stream-Sort
connected components algorithm is deterministic and takes $O(\log n)$ passes and
$O(\log n)$ memory. It requires only one sorting step per pass and is
straightforward to implement.

A randomized MapReduce algorithm by Rastogi et al.~\cite{bib:rastogi2013} was
one of the first to show promise of fast convergence for connected
components. But it was later shown in~\cite{bib:andoni2018} that their
\emph{Hash-to-Min} algorithm~\cite{bib:rastogi2013} takes $\Omega(\log n)$
rounds. It uses a single task to send an entire component to another, which for
a giant component will effectively serialize the communication. In 2014 Kiveris
et al.~\cite{bib:kiveris2014} introduced their \emph{Two-Phase} algorithm, which
takes $O(\log^2 n)$ rounds and $O(m\log^2 n)$ communication overall. Unlike the
\emph{Hash-to-Min} algorithm it avoids sending the giant component to a single
reduce task.

We introduce a new MapReduce algorithm that is comparable to \emph{Two-Phase}
while being deterministic and simple to implement. Like \emph{Two-Phase} our
algorithm does not load components into memory or send an entire component
through a single communication channel. We go further in memory conservation by
maintaining $O(1)$-space working memory. Our MapReduce algorithm completes in
$O(\log n)$ rounds using $O(m\log n)$ communication, thereby improving the
state-of-the-art by $\Omega(\log n)$ factor in both convergence and
communication.

Although we do not study the MPC model~\cite{bib:mpc2013,bib:mpc2017} in this
paper, we want to highlight recent breakthrough work in this model. The MPC
model is a generalization of MapReduce and other Bulk Synchronous Parallel (BSP)
style models. The MPC model is more powerful than MapReduce; a MapReduce
algorithm can be simulated in MPC with the same runtime. In 2018 Andoni et
al.~\cite{bib:andoni2018} gave a randomized $O(\log D \log \log_{m/n} n)$ round
algorithm in MPC for connected components. Their algorithm uses the
leader-contraction method that works by selecting a small fraction of leader
vertices while maintaining high probability that non-leader vertices are
adjacent. To achieve this the authors add edges so the graph has uniformly large
degree, but to avoid $\Omega(n^3)$ communication cost they carefully manage how
edges are added. This result was later improved in 2019 by Behnezad et
al.~\cite{bib:behnezad2019}, who gave a $O(\log D + \log \log_{m/n} n)$-round,
randomized algorithm in MPC. More recently, Liu et al.~\cite{bib:liu2020} gave a
randomized CRCW PRAM algorithm based the work of Andoni et
al.~\cite{bib:andoni2018} and Behnezad et al.~\cite{bib:behnezad2019}, taking
$O(\log D + \log \log_{m/n} n)$ time using $O(m)$ processors. These results use
randomization and are not simple to implement, which conflicts with our
motivation.

The work most closely related to ours is that of Liu and
Tarjan~\cite{bib:liu_tarjan2019} who gave a family of label propagation
algorithms taking between $O(\log n)$ and $O(\log^2 n)$ steps. Similar to our
approach they propagate labels by minimum reduction at each step, which creates
a directed graph. But in their approach they maintain acyclicity and hence their
algorithms produce a forest of trees each step. In contrast, our algorithm does
not require maintaining trees as evident in Figure~\ref{fig:method}. They also
perform a \emph{short-cutting} operation in which the parent of every vertex is
replaced with its grandparent, whereas we only exchange labels between edge
endpoints. They had analyzed an earlier version of our algorithm and conjectured
it takes polylog steps to converge. Our current algorithm in this
paper~\cite{bib:oct192018} is simpler than our previous version and those
in~\cite{bib:liu_tarjan2019}. Moreover, it has the added benefit that the work
per step is easily shown to take no more space than the input graph.

\section{\label{sec:core}Principal algorithm}
We begin with our principal algorithm to establish the framework and core
principles. The essential operations, as succinctly summarized in the
introduction, are simply for every $(v,u)$ edge if $u$ is not $l(v)$ then
\emph{min update} and \emph{label propagation}, else \emph{symmetrization},
repeating until no label changes. Here Algorithm~\ref{alg:core} describes our
method in full. We don't specify any model now so we can focus on the basic
procedures. It should be noted that this principal algorithm achieves linear
work and fast convergence in both sequential and parallel settings. We will use
$N_k(v)$ to denote the neighborhood of a vertex $v$ at step $k$ and
$N^+_k(v)=\{v\} \cup N_k(v)$ as the closed neighborhood. Then let
$l(v)=\min(N^+_k(v))$ be the current minimum label for $v$. We use this $l(v)$
notation without a step subscript for simplicity. In our algorithm listings we
use arrays $L_{k}$ in its place, e.g. $L_k[v]$ holds $l(v)$ for $v$ at step
$k$. Only two such arrays are needed in each step. Before the algorithm starts
$L_1$ is initialized with the $l(v)$. For all algorithms we use $E_k$ to denote
the edges that will be processed at step $k$, but $E_k$ is a multiset because it
may contain duplicates.

\begin{algorithm*}[t]
\caption{\label{alg:core}}
\begin{algorithmic}[1]
\Statex $L_k$ \Comment arrays for $l(v)$ at each step $k$
\Statex Initialize $L_1$ with all starting $l(v)$
\For{$k=1,2,\ldots$ until labels converge}
  \State set $L_{k+1} := L_k$
  \ForAll{$(v,u) \in E_k$}
    \If{$u \ne L_k[v]$}
       \State set $L_{k+1}[u] := \min(L_{k+1}[u],L_k[v])$
       \State add $(u,L_k[v])$ to $E_{k+1}$ \Comment Label Propagation
    \Else
      \State add $(u,v)$ to $E_{k+1}$ \Comment Symmetrization
    \EndIf
  \EndFor
\EndFor
\end{algorithmic}
\end{algorithm*}

We employ two edge operations, \emph{label propagation} and
\emph{symmetrization}, defined as follows.

\begin{definition}
  \label{def:lp}
  Label propagation replaces $(v,u)$ with $(u,l(v))$ if $u \ne l(v)$.
\end{definition}

\begin{definition}
  \label{def:sym}
  Symmetrization replaces $(v,u)$ with $(u,v)$ if $u = l(v)$.
\end{definition}

Since $G$ does not contain loop edges then these operations cannot create loops,
otherwise it would contradict the minimum value. Thus $u = l(v)$ implies $v \ne
l(v)$. Label propagation is primarily responsible for path contraction. Minimum
label updates are concomitant with label propagation. Symmetrization keeps an
edge between a vertex and its minimum label. Thus for an edge $(v,u)$ in one
step there will be either $(u,v)$ or $(u,l(v))$ in the next step. In
Algorithm~\ref{alg:core} each edge is replaced by one new edge, either by label
propagation or symmetrization, so the edge count does not increase. The
algorithm is illustrated in Figure~\ref{fig:method}. Notice the $(2,1)$ edge is
duplicated in the second step. Also observe that the $(4,2)$ edge in the second
step produces $(2,1)$ for the third step because $L_2(4)=1$.

We remark that Algorithm~\ref{alg:core} can be terminated a number of ways
without affecting the asymptotic bounds. For example, we can detect when labels
no longer change by simply keeping a counter for the label propagation
branch. At each step the counter is set to zero. If the minimum label $l(v)$ for
a vertex $v$ is not $v$ itself, the counter is incremented. Now observe that in
the final star graph, only the root of the star can fall into the label
propagation branch but since the root is the minimum label, the counter cannot
be updated. This simple check and update takes $O(1)$ time for each label
propagation operation and is therefore free. We use this approach in our
implementation of Algorithm~\ref{alg:core} described in Section~\ref{sec:impl}.

In $G$ an undirected edge $\{u,v\}$ is comprised of counter-oriented twins
$(v,u),(u,v)$, thus there are $2m$ edges in total and $G$ is symmetric. As
stated in the introduction, we are careful to create just one twin. Thus at each
step the graph may be directed. By propagating single directed edges and
alternating the direction of the edge with a minimum label, we are able to limit
the work in each step to $O(m)$ edges while also maintaining overall
connectivity. Say for a $(v,u)$ edge that $u$ is the minimum for itself and
$v$. This $(v,u)$ becomes $(u,v)$ by symmetrization and then $(v,u)$ again by
label propagation, cycling until the algorithm ends or a new minimum is
acquired. If at some later step either endpoint gets a new minimum, then that
endpoint can propagate it to the other, thereby replacing their edge which
prohibits retaining an obsolete minimum label. In this case, $u$ cannot be
passed to $v$ again because there will be no edge to $v$.

\begin{claim}
  \label{clm:connected}
  If $(v,u)$ exists at some step, then $v,u$ are connected for all remaining
  steps in Algorithm~\ref{alg:core}.
\end{claim}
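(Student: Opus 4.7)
The plan is to reduce the claim to a single-step preservation statement: for every edge $(v,u) \in E_k$, the endpoints $v$ and $u$ lie in the same connected component of the undirected graph underlying $E_{k+1}$. Once this is in hand, iterating along a $v$--$u$ path at step $k$ and inducting on the step index extends connectivity to all subsequent steps, which is exactly the claim.

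For the single step, observe that the only two possible actions on $(v,u)$ are symmetrization and label propagation. Symmetrization (when $u = L_k[v]$) produces $(u,v) \in E_{k+1}$ directly, so connectivity is trivially preserved. Label propagation replaces $(v,u)$ with $(u, L_k[v]) \in E_{k+1}$; here $u$ is directly joined to $L_k[v]$, so the remaining task is to exhibit a path in $E_{k+1}$ between $v$ and $L_k[v]$. The case $L_k[v] = v$ is immediate, so the content of the argument is the sub-claim below.

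The core sub-claim is: for every vertex $v$ with $\ell := L_k[v] \ne v$, the pair $v, \ell$ is connected in the undirected graph underlying $E_{k+1}$. I would prove it by strong induction on the vertex $v$ (under the ordering of vertex labels), with the globally smallest vertex as a vacuous base, since that vertex satisfies $L_k[v] = v$. For the inductive step, $\ell < v$ and $\ell \in N_k(v)$, so $E_k$ contains at least one of $(v,\ell)$ and $(\ell,v)$. If $(v,\ell) \in E_k$, symmetrization yields $(\ell,v) \in E_{k+1}$ and we are done. Otherwise $(\ell,v) \in E_k$, and since $L_k[\ell] \le \ell < v$ the edge undergoes label propagation, producing $(v, L_k[\ell]) \in E_{k+1}$. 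If $L_k[\ell] = \ell$ this is already the desired direct edge; if $L_k[\ell] < \ell$, then $\ell < v$ satisfies $L_k[\ell] \ne \ell$, so the inductive hypothesis applied at $\ell$ connects $\ell$ to $L_k[\ell]$ in $E_{k+1}$, and the composed chain $v \sim L_k[\ell] \sim \ell$ finishes the sub-claim.

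The main obstacle is precisely the orientation asymmetry in the sub-claim: when the surviving edge between $v$ and its minimum-labelled neighbor $\ell$ is stored as $(\ell,v)$ rather than $(v,\ell)$, label propagation erases the direct $v$--$\ell$ connection and replaces it with an edge $v \to L_k[\ell]$ that may skip past $\ell$ entirely, so connectivity has to be recovered through a chain of strictly decreasing labels. What makes the plan go through is that the strong induction on $v$ rides on exactly this decrease: each recursive invocation moves to a vertex with a smaller label, and the induction terminates at the minimum of the component, which is its own minimum label.
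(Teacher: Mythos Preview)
Your proof is correct and takes a different route from the paper's. Both arguments reduce to showing that $v$ and $\ell := L_k[v]$ end up connected in $E_{k+1}$ whenever $\ell \ne v$, and both use that at least one of $(v,\ell),(\ell,v)$ lies in $E_k$ (immediate from $L_k[v]=\min N_k^+(v)$). From there the paper argues \emph{temporally}: when $(v,\ell)\notin E_k$ it looks back to $E_{k-1}$, locates $(v,\ell)$ there, traces symmetrization forward to place $(\ell,v)$ in $E_k$, and then asserts that label propagation puts $(v,\ell)$ into $E_{k+1}$. You instead stay within a single step and run strong induction on the vertex label: from $(\ell,v)\in E_k$ label propagation yields $(v,L_k[\ell])\in E_{k+1}$, and if $L_k[\ell]<\ell$ you recover the $\ell$--$L_k[\ell]$ connection from the inductive hypothesis applied at the strictly smaller vertex $\ell$. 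Your descending-chain induction is arguably the more robust of the two, since the paper's last step tacitly assumes $L_k[\ell]=\ell$ (otherwise label propagation on $(\ell,v)$ produces $(v,L_k[\ell])$, not $(v,\ell)$); your argument handles $L_k[\ell]<\ell$ explicitly, at the cost of producing a possibly longer $v$--$\ell$ path in $E_{k+1}$ rather than a single edge.
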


\begin{proof}
  First we will demonstrate that given $(v,u)$ in a step then $v,u$ remain
  connected in the next step because either $(u,v)$ or $(u,l(v))$ are
  created. In the former case the connection is obvious. In the latter case
  $v,u$ are connected through $l(v)$ because either $(v,l(v))$ or $(l(v),v)$
  will be simultaneously created with $(u,l(v))$.

  In the first case, $(u,v)$ can be created as follows.

  If $v=l(v)$ then $(u,v)$ is created by label propagation.

  If $u=l(v)$ then $(u,v)$ is created by symmetrization.

  In the second case $(u,l(v))$ is created by label propagation, and
  simultaneously either $(v,l(v))$ or $(l(v),v)$ will also be created as
  follows.

  In the current step there must exist $(v,l(v))$ or $(l(v),v)$ due respectively
  to label propagation or symmetrization from the previous step. Then an edge
  between $v,l(v)$ is created by the following.

  If $(v,l(v))$ exists then $(l(v),v)$ is created by symmetrization.

  If $(l(v),v)$ exists then $(v,l(v))$ is created by label propagation.

  We have now established from these two cases that $v$ and $u$ remain connected
  after one step. By inductively applying this to every new edge, it follows
  that a connection between $v$ and $u$ is preserved in all subsequent steps.
\end{proof}

\begin{lemma}
  \label{lem:correctness}
  Algorithm~\ref{alg:core} finds the connected components of $G$.
\end{lemma}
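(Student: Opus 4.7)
My plan is to combine monotone convergence of labels with a contradiction argument at the stable state. The first step is to observe that line 5 of Algorithm~\ref{alg:core} writes only minima, so $L_{k+1} \leq L_k$ pointwise. By induction on $k$ using Claim~\ref{clm:connected}, every value ever stored in $L_k[v]$ is the ID of a vertex in the same connected component of $G$ as $v$; hence $L_k[v]$ is lower-bounded by $\min C(v)$. Positive-integer-valued, non-increasing, and bounded below, the labels must stabilize in finitely many steps at some configuration $L^*$.

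For correctness of $L^*$, I would fix a component $C$ with minimum $c$ and suppose for contradiction that the set $S = \{v \in C : L^*[v] > c\}$ is non-empty. Connectedness of $G[C]$ gives an original edge $\{x,y\}$ with $x \in S$ and $L^*[y] = c$. Both twins $(x,y),(y,x)$ lie in $E_1$, so iterated application of Claim~\ref{clm:connected} keeps $x$ and $y$ connected in the underlying undirected graph of $E_{k^*}$ at the stable step. A connecting path must then contain a crossing directed edge with endpoints $s \in S$ and $t$ satisfying $L^*[t] = c$.

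The last step is case analysis on the orientation of this crossing edge. The favorable direction $(t,s) \in E_{k^*}$ immediately triggers label propagation (since $s > c = L^*[t]$ forces $s \ne L^*[t]$), updating $L_{k^*+1}[s]$ to $\min(L^*[s], c) = c$---a strict decrease contradicting stability. The adverse direction $(s,t) \in E_{k^*}$ produces one step later either $(t,s) \in E_{k^*+1}$ (by symmetrization if $t = L^*[s]$) or $(t, L^*[s]) \in E_{k^*+1}$ (by label propagation otherwise); in the first subcase the favorable argument applies at step $k^*+1$, while the second requires chasing through the label chain $s \mapsto L^*[s] \mapsto L^*[L^*[s]] \mapsto \cdots$.

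The main obstacle I anticipate is closing the adverse subcase when the chain does not bottom out in one jump, i.e.\ when $L^*[L^*[s]] > c$, or worse, when an intermediate $v$ has $L^*[v] = v \neq c$. I would handle this either by (a) proving an auxiliary fixed-point lemma that in the stable state $L^*[L^*[v]] = L^*[v] = c$ for every $v \in C$, exploiting the fact that symmetrization guarantees a round-trip edge from each $v$ to $L^*[v]$ which must eventually import $L^*[L^*[v]]$ back into $L^*[v]$, or (b) interpreting ``converge'' as stability over a bounded number of consecutive iterations so that the chain-chasing argument can be executed across several steps. Either route turns every adverse crossing into an eventual strict decrease, yielding the contradiction, forcing $S = \emptyset$ in every component, and proving the lemma.
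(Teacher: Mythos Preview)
Your approach is quite different from the paper's. The paper's proof is a two-line sketch: it invokes Claim~\ref{clm:connected} to say each component stays connected in the evolving edge multiset, and then simply \emph{asserts} that ``label propagation ensures that $u$ is propagated to every vertex in the component,'' with symmetrization holding the resulting $(v,u)$ edges in place. No stable-state analysis, no contradiction, no case split on edge orientation appears; the propagation is taken as evident once connectivity is preserved. Your argument is considerably more careful and makes explicit exactly what the paper leaves implicit.

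The obstacle you flag in the adverse subcase is genuine, and your option~(a) is the right repair. Here is how to close it. Suppose $L^*[v]=w\ne v$ and let $j$ be the first step with $L_j[v]=w$. If $j=1$ then $w\in N_0(v)$ so $(v,w)\in E_1$; if $j>1$ then the line-5 update came from some $(x,v)\in E_{j-1}$ with $L_{j-1}[x]=w$, and line~6 put $(v,w)\in E_j$. From step $j$ on this single edge oscillates: symmetrization sends $(v,w)\mapsto(w,v)$ because $w=L_k[v]$, and label propagation sends $(w,v)\mapsto(v,L_k[w])$, which is $(v,w)$ again while $L_k[w]=w$. If $L^*[w]<w$, let $j'$ be the first step with $L_{j'}[w]<w$; within at most two further steps the oscillating edge is in state $(w,v)$ and line~5 forces $L[v]\le L[w]<w=L^*[v]$, contradicting stability. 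Hence $L^*$ is idempotent on its image, which kills both the ``chain does not bottom out'' worry and the ``intermediate $L^*[v]=v\ne c$'' worry, and your adverse subcase reduces to the favorable one after one step.

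One small correction: the lower bound $L_k[v]\ge \min C(v)$ does not follow from Claim~\ref{clm:connected}, which concerns connectivity inside $E_k$ rather than in $G$. What you need is the separate, easy induction that every edge of $E_k$ joins two vertices of the same component of $G$ (true for $E_1$, and preserved because each new endpoint is either an old endpoint or the current label of one).
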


\begin{proof}
  Let $v$ be the minimum label of a component and so the minimum label for $v$
  is itself. Let $v^\prime$ denote any vertex that is not $v$ but has $v$ as its
  minimum label. Only the edges $(v,u)$ and $(u,v)$ need to be analyzed because
  all other edges will follow.

  Any $(v,u)$ is replaced with $(u,v)$ by label propagation and so $u$ gets $v$
  as its minimum label. Now $u$ is a new $v^\prime$ and will pass $v$ by label
  propagation. Thus any subsequent $(v^\prime,u)$ is the same case as $(v,u)$.

  Any $(u,v)$ where $l(u)$ is not $v$ is replaced with $(v,l(u))$. Observe that
  this $(v,l(u))$ is the same case as $(v,u)$. Moreover, any $(u,v^\prime)$
  where $l(u)$ is not $v$ is the same case as $(u,v)$, which subsequently leads
  to the case of $(v,u)$.

  Claim~\ref{clm:connected} establishes that if $(v,u)$ exists at some step,
  then $v,u$ are connected for all remaining steps. It follows that a component
  remains connected at each step. Then by induction on $(v,u),(u,v)$ cases every
  vertex in a component gets the same representative label for that component.
\end{proof}

\begin{lemma}
  \label{lem:edge_count}
  Algorithm~\ref{alg:core} creates $2m$ edges each step.
\end{lemma}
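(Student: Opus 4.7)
The plan is to argue by induction on $k$ that $|E_k| = 2m$ as a multiset for every step $k$.

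For the base case I would verify the initial configuration. By the discussion preceding the lemma, the undirected input graph $G$ is represented as the collection of counter-oriented twins: each undirected edge $\{u,v\}$ contributes both $(v,u)$ and $(u,v)$, giving $|E_1|=2m$.

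For the inductive step I would show that the body of the \textbf{for each} loop transforms each edge in $E_k$ into exactly one edge in $E_{k+1}$. Inspecting Algorithm~\ref{alg:core} line by line: every edge $(v,u)\in E_k$ triggers exactly one of the two branches of the \textbf{if}-\textbf{else}. In the label-propagation branch ($u \ne L_k[v]$) a single edge $(u,L_k[v])$ is added; in the symmetrization branch ($u=L_k[v]$) a single edge $(u,v)$ is added. In both cases the contribution to $E_{k+1}$ is exactly one edge, so summing over $E_k$ gives $|E_{k+1}|=|E_k|=2m$ by the inductive hypothesis.

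The only subtle point to address is that $E_{k+1}$ is treated as a multiset, so I would emphasize that the accounting above is multiplicity-preserving: distinct input edges that happen to produce the same pair (for example the duplicated $(2,1)$ appearing in Figure~\ref{fig:method}) are counted separately. I would also note, for completeness, that neither operation emits a self-loop, so no edges are silently discarded: label propagation is guarded by $u\ne L_k[v]$, and in the symmetrization branch $u=L_k[v]$ forces $v\ne u$ (otherwise $v=u=L_k[v]$ would contradict the absence of loops in $G$, since minimum labels only travel along existing edges). I do not expect any serious obstacle; the whole argument is essentially a bookkeeping check that every branch of the loop is an in-place one-for-one replacement.
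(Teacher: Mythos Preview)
Your proposal is correct and follows essentially the same approach as the paper: both arguments observe that each $(v,u)\in E_k$ is replaced by exactly one edge in $E_{k+1}$ (via either label propagation or symmetrization) and that the initial count is $2m$. Your version is merely more explicit about the induction, the multiset accounting, and the absence of self-loops, all of which the paper leaves implicit in its two-line proof.
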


\begin{proof}
  Any new edge is a replacement of an existing $(v,u)$ edge as follows.

  If $u \ne l(v)$ then $(v,u)$ is replaced with $(u,l(v))$ by label propagation.

  If $u = l(v)$ then $(v,u)$ is replaced with $(u,v)$ by symmetrization.

  Since there are $2m$ edges in $G$ then there are $2m$ edges in the first step
  and by induction there are $2m$ edges in every step.
\end{proof}

This last result has significant practical benefit because it ensures that the
read/write costs remain linear with respect to the edges at each step, otherwise
on very large graphs these costs can be the bottleneck with respect to runtime
performance.

We have shown that our algorithm is very simple and takes $O(m)$ work each step,
making it appealing for practical applications. This is demonstrated in
Section~\ref{sec:exp} where our algorithm empirically converges in $O(\log n)$
steps on a variety of graphs. Despite the simplicity of our algorithm, the
convergence in general is difficult to analyze. Hence we can only conjecture the
following and leave the proof as a challenging open problem.

\begin{conjecture}
\label{cnj:convergence}
Algorithm~\ref{alg:core} converges in $O(\log n)$ steps.
\end{conjecture}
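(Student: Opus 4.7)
\bigskip

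\noindent\textbf{Proof proposal for Conjecture~\ref{cnj:convergence}.}

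The plan is to define a potential function $\Phi_k$ that measures distance to convergence and show that $\Phi_k$ contracts geometrically, giving $O(\log n)$ rounds from an initial value polynomial in $n$. A natural candidate, motivated by the pointer-jumping analysis of Shiloach--Vishkin, is the maximum over all vertices $v$ of the length of a shortest path in the undirected support of $E_k$ from $v$ to the true component minimum $r_C$ of $v$'s component. By Claim~\ref{clm:connected} this quantity is always finite, and by Lemma~\ref{lem:correctness} once it drops to zero the labels converge within a constant number of additional rounds. The target inequality is $\Phi_{k+1} \le \lceil \Phi_k / 2 \rceil + O(1)$, which iterated yields $\Phi_k = O(1)$ after $O(\log n)$ steps.

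First I would fix a shortest such path $v_0, v_1, \ldots, v_t = r_C$ witnessing $\Phi_k$ and argue that one round performs an implicit pointer-jump along it. For each edge $(v_{i-1}, v_i)$ (or its twin) in $E_k$, label propagation produces $(v_i, l(v_{i-1}))$, and since $l(v_{i-1}) \le v_{i-1}$ the new endpoint lies at least as close to $r_C$ as the one it replaces; symmetrization on the counter-oriented twin delivers $(v_{i-1}, v_i)$, keeping the reverse direction available for the next round. A careful case analysis, splitting on whether each step of the path uses the forward or reverse twin and whether $l(v_{i-1})$ is on the path or off it, should show that alternating propagations and symmetrizations along the path produce a new path in $E_{k+1}$ of length at most $\lceil t/2 \rceil$ connecting $v_0$ (or a vertex whose current label is already $r_C$) to $r_C$. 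Coupling this with the monotone non-increase of labels gives the inductive step.

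The main obstacle is that $E_{k+1}$ is not a local rewriting of $E_k$: label propagation may send an edge $(v,u)$ to a vertex $l(v)$ that is undirectedly far from $u$, so the new edges can form a fundamentally different graph whose shortest paths have no obvious relation to those of $E_k$. In particular, the path witnessing $\Phi_k$ may be destroyed, and the path witnessing $\Phi_{k+1}$ may be assembled from edges created by propagations occurring elsewhere. Adversarial topologies can conceivably cancel the halving effect, and classical pointer-jumping analyses do not transfer directly because they rely on each vertex carrying an explicit parent pointer, whereas here the implicit ``pointers'' are realized by a multiset of edges that may duplicate, disappear, or redirect between rounds. This is precisely why the conjecture remains open and why Liu and Tarjan~\cite{bib:liu_tarjan2019} left the ambiguity between $O(\log n)$ and $O(\log^2 n)$. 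As a fallback I would attempt the weaker bound by an amortized charging argument: assign each unconverged vertex a potential equal to the bit-length of $l(v) - r_C$, and show each round either halves some vertex's potential directly or creates a shortcut edge that forces such a halving in a bounded number of later rounds, so that the total amortized number of rounds is $O(\log^2 n)$.
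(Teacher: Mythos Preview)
The paper does not prove Conjecture~\ref{cnj:convergence}; it is stated explicitly as an open problem. The only rigorous convergence results the paper establishes are for the special case of path graphs (Lemma~\ref{lem:fibonacci}, Proposition~\ref{prop:fib_convergence}, Proposition~\ref{prop:path_convergence}), obtained via a Fibonacci-sequence argument and a star-merging doubling analysis. There is therefore no ``paper's own proof'' of this statement to compare your proposal against.

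Your proposal is not a proof either, and to your credit you say so. The potential $\Phi_k$ you define (graph distance to the component minimum in the undirected support of $E_k$) is a reasonable quantity to track, but the step where you claim a round turns a witnessing path of length $t$ into one of length at most $\lceil t/2\rceil$ is exactly the unestablished part, and you correctly identify the obstruction: label propagation can send $(v,u)$ to $(u,l(v))$ where $l(v)$ is far from the original path, so the path structure in $E_k$ need not survive into $E_{k+1}$ in any controlled way. Without a lemma that either preserves enough of the path or bounds the new distances via some other invariant, the contraction inequality is an assertion rather than a deduction. The fallback $O(\log^2 n)$ charging sketch is similarly a plan rather than an argument: you would need to exhibit the actual charging scheme and prove that every round makes amortized progress, which is again the heart of the matter.

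In short, your proposal accurately surveys the difficulty and aligns with the state of affairs reported in the paper (and in Liu--Tarjan~\cite{bib:liu_tarjan2019}): the conjecture is open, the natural potential-function attack runs into the non-locality of the edge rewriting, and no one yet has the missing structural lemma. If you want a provable partial result to include, the paper's path-graph argument is the template; extending it even to trees would already be new.
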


Instead, we can show the convergence on path graphs, which might provide insight
to a more general proof. Observe that only minimum labels are propagated and a
minimum for a vertex can only be replaced with a lesser label. At each step,
minimum labels are exchanged between the endpoints of each $(v,u)$ edge through
label propagation, and symmetrization maintains the edge between a vertex and
its minimum. Thus each length-two path is shortened by label propagation, and
symmetrization ensures that a minimum label vertex and its subordinate can both
get a new minimum from one or the other in a later step.

An interesting case is label convergence on a sequentially labeled path. On such
a path, label convergence follows a Fibonacci sequence. A na\"ive label
propagation algorithm will duplicate labels leading to a progressive increase in
the number of edges. We give the following results for our algorithm. An
interested reader can find the proofs in Appendix~\ref{sec:chain_proof}.

\begin{lemma}
  \label{lem:fibonacci}
  Algorithm~\ref{alg:core} on a sequentially labeled path propagates labels in
  Fibonacci sequence, specifically at each step $k$ the label difference
  $\Delta_k(v,l_k(v))=v-l_k(v)$ follows $\Delta_k(v,l_k(v))=F_k$ where $F_k =
  F_{k-1} + F_{k-2}$.
\end{lemma}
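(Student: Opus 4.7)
The plan is to establish by induction on $k$ a stronger invariant describing both the label array $L_k$ and the edge multiset $E_k$ produced after step $k$ on a sufficiently long sequentially labeled path $1$--$2$--$\cdots$--$n$. Set $F_1=1$, $F_2=2$, and $F_k=F_{k-1}+F_{k-2}$ for $k\ge 3$. I would prove that at step $k$: (i) $L_k[v]=\max(1,v-F_k)$ for every vertex $v$; and (ii) $E_k$ contains, over appropriate ranges of $v$, two families of directed edges, a \emph{symmetrization class} $(v,v+F_{k-1})$ and a \emph{propagation class} $(v,\max(1,v-F_k))$. Once (i) is in place, the lemma follows: for any interior vertex $v>F_k$, $\Delta_k(v,l_k(v))=v-(v-F_k)=F_k$, which satisfies the stated Fibonacci recurrence.

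I would first dispatch the base cases by direct unrolling of Algorithm~\ref{alg:core}. For $k=1$ the initial labels give $L_1[v]=\max(1,v-1)$ and $E_1$ is the set of twins of the undirected path edges, matching (i)--(ii) with $F_1=1$. For $k=2$, tracing one iteration shows that each backward twin $(v+1,v)$ has $L_1[v+1]=v=u$ and undergoes symmetrization, producing $(v,v+1)$; each forward twin $(v,v+1)$ has $L_1[v]=v-1\neq v+1$ and undergoes label propagation, producing $(v+1,v-1)$ together with the update $L_2[v+1]:=v-1$. This yields $L_2[v]=\max(1,v-2)$ and the two predicted edge families with $F_2=2$.

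For the inductive step, assume the invariant holds at step $k$ and apply Algorithm~\ref{alg:core}. A sym-class edge $(v,v+F_{k-1})\in E_k$ has source label $L_k[v]=v-F_k$, distinct from $v+F_{k-1}$, so label propagation fires: it adds $(v+F_{k-1},v-F_k)$ to $E_{k+1}$ and sets $L_{k+1}[v+F_{k-1}]:=v-F_k$. Substituting $w=v+F_{k-1}$, this is $(w,w-(F_{k-1}+F_k))=(w,w-F_{k+1})$ with label $L_{k+1}[w]=w-F_{k+1}$, precisely the prop class at step $k+1$. A prop-class edge $(v,v-F_k)\in E_k$ has source label $L_k[v]=v-F_k=u$, so symmetrization fires, adding $(v-F_k,v)$; writing $a=v-F_k$ gives $(a,a+F_k)$, the sym class at step $k+1$. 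Thus the two edge families swap roles each step, and one combined iteration realizes the Fibonacci recurrence $F_{k+1}=F_k+F_{k-1}$.

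The main obstacle will be the boundary near vertex 1, where the $\max(1,v-F_k)$ clamp collapses several would-be prop-class edges into duplicate edges of the form $(a,1)$, so the tidy periodic structure in the interior is superimposed with a growing cloud of edges incident to 1. I would handle this by tracking boundary edges as a separate family: any sym-class edge whose target would cross below $1$ is collapsed to $(a,1)$ and thereafter only symmetrizes with vertex 1 in place (since $L_k[a]=1$ once $a\le F_k+1$), so the boundary edges do not perturb the interior recurrence. With the invariant verified in both regions, the Fibonacci identity $\Delta_k=F_k$ for interior $v$ is immediate.
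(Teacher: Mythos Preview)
Your proposal is correct and takes a genuinely different route from the paper's own argument. The paper reasons entirely at the level of the label differences $\Delta_k(v,l_k(v))$: it observes that on a monotone path the incoming edge at step $k$ that is \emph{not} the current minimum must have arisen by symmetrization of the step-$(k-1)$ minimum edge, so $\Delta_{k-1}(v,u)=\Delta_{k-2}(v,l_{k-2}(v))$, and then closes the recurrence $\Delta_k=\Delta_{k-1}+\Delta_{k-2}$ by induction. It never writes down $E_k$ explicitly and implicitly restricts attention to interior vertices. By contrast, you strengthen the inductive hypothesis to describe both $L_k$ and the full edge multiset $E_k$, identifying two alternating families (your ``sym'' and ``prop'' classes) whose role-swap under one iteration realizes $F_{k+1}=F_k+F_{k-1}$, and you isolate the boundary near vertex $1$ as a separate family that stabilizes once labels clamp to $1$. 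Your approach is more explicit and makes the mechanism (which edge fires which rule) transparent; the paper's approach is shorter but leans on an informal identification of ``the other incoming edge'' with the previous minimum and is somewhat loose about indexing and boundary behavior. One small cosmetic point: your invariant at $k=1$ invokes $F_0$, which you did not define; simply set $F_0=1$ (consistent with $F_2=F_1+F_0$) and the base case reads cleanly.
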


The label updates follow a Fibonacci sequence so the next statement holds.

\begin{proposition}
  \label{prop:fib_convergence}
  Algorithm~\ref{alg:core} converges in $\log_\phi n = O(\log n)$ steps on a
  sequentially labeled path, where $\phi \approx 1.618$ is the Golden Ratio
  value.
\end{proposition}

It is intuitive that the convergence of Algorithm~\ref{alg:core} on any path
doesn't take asymptotically more steps than on a sequentially labeled path.

\begin{claim}
  \label{clm:4path}
  Algorithm~\ref{alg:core} on any $4$-path converges in at most three steps.
\end{claim}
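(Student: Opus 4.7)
The plan is a small case analysis on the position of the component minimum $m = \min\{v_1, v_2, v_3, v_4\}$ in the path $v_1 - v_2 - v_3 - v_4$. The relevant initialization fact is that $l_0(v) = m$ iff $v$ lies within distance $1$ of the unique vertex whose label is $m$, so the set of vertices that already carry $m$ before any iteration depends only on whether $m$ sits at an interior position or an endpoint. By reflection symmetry of the path I only need to consider $m = v_2$ (interior) and $m = v_1$ (endpoint).

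In the interior case $m = v_2$, initialization gives $l_0(v_1) = l_0(v_2) = l_0(v_3) = m$, with only $l_0(v_4) > m$. Since $(v_3, v_4) \in E_1$ as one of the original twins and $v_4 \ne m = l_0(v_3)$, a single label-propagation at step $1$ sets $l_1(v_4) = m$. All four labels then equal $m$, so the interior case stabilizes after one iteration.

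The endpoint case $m = v_1$ is where the real work is. Initially only $v_1, v_2$ carry $m$, so the minimum must propagate first to $v_3$ and then to $v_4$. Step $1$ processes $(v_2, v_3) \in E_1$ with $l_0(v_2) = m$, which sets $l_1(v_3) = m$; however $l_1(v_4)$ is generally still strictly above $m$. For step $2$ to close the gap I need some edge $(w, v_4) \in E_2$ with $l_1(w) = m$, and the only candidate for $w$ is $v_3$. Thus the main (if small) obstacle is to show $(v_3, v_4) \in E_2$, irrespective of how the original twin pair $(v_3, v_4), (v_4, v_3)$ was handled at step $1$. I will check this by splitting on $l_0(v_4) = \min(v_3, v_4)$: if $l_0(v_4) = v_3$, symmetrization of $(v_4, v_3)$ produces $(v_3, v_4)$; otherwise $l_0(v_4) = v_4$ and label propagation of $(v_4, v_3)$ produces $(v_3, l_0(v_4)) = (v_3, v_4)$. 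Either way $(v_3, v_4) \in E_2$, so processing it at step $2$ with $l_1(v_3) = m$ yields $l_2(v_4) = m$.

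Combining both cases, every $4$-path reaches the all-$m$ labeling by the end of step $2$; since labels can only decrease, a third iteration (if performed) produces no changes, giving the claimed bound of at most three steps.
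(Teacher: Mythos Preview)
Your argument is correct and in fact slightly sharper than the paper's. You track the array $L_k$ directly and verify, via an explicit edge trace, that every vertex carries the component minimum $m$ after at most two iterations (one in the interior case, two in the endpoint case). The only nontrivial step is showing $(v_3,v_4)\in E_2$ in the endpoint case, and your dichotomy on $l_0(v_4)\in\{v_3,v_4\}$---symmetrization of $(v_4,v_3)$ versus label propagation of $(v_4,v_3)$---handles this cleanly.

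The paper's proof takes a different route: it reasons about the evolving edge set rather than the label array, arguing that the graph becomes a \emph{star rooted at the minimum}. That viewpoint costs one extra step in each case (two steps interior, three steps endpoint) because after all labels equal $m$ there can still be a surviving non-tree edge that needs one more iteration to be redirected to the root. Since the algorithm's halting condition is ``labels converge,'' your interpretation is the literal one and your bound of two steps is valid for the claim as stated. Be aware, though, that the paper later invokes this claim inside Proposition~\ref{prop:path_convergence} to assert that each 4-subpath has become an actual star before applying Claim~\ref{clm:starmerge}; your label-only argument does not by itself yield that structural conclusion, so if you intend to reuse the claim there you would need the extra step the paper accounts for.
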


\begin{proof}
  Observe that every vertex in a $4$-path is within a distance of three from any
  other vertex in the path. Now recall that a vertex can only get a new minimum
  label by label propagation and label propagation contracts a length-two
  path. Let $C(G)$ denote the minimum labeled vertex of the path and is
  therefore the component label.

  It isn't difficult to see that if $C(G)$ is not an endpoint of the $4$-path
  then the algorithm converges to a star in two steps. This is because all
  vertices are a distance of two from $C(G)$ and will be connected to it in the
  first step. Then it takes one more step to break the non-tree edge from the
  endpoint that originally was not connected to $C(G)$.

  If $C(G)$ is an endpoint of the $4$-path then the other endpoint does not get
  an edge to $C(G)$ in the first step because it is a distance of three from
  $C(G)$. But this other endpoint will be connected to some other vertex that is
  connected to $C(G)$ and therefore gets $C(G)$ in the next step. Since all
  other vertices get $C(G)$ in the first step, then the total number of steps
  for convergence is three. Thus the claim holds.
\end{proof}

\begin{claim}
\label{clm:starmerge}
Given two stars rooted by their respective minimum labels and then connected by
the roots, it takes Algorithm~\ref{alg:core} at most three steps to converge.
\end{claim}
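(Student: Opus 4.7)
The plan is to trace the algorithm directly on the explicit merged graph. Let the two star centres be $a$ and $b$ with $a < b$, so $a$ is the component minimum and is the target label for every vertex by Lemma~\ref{lem:correctness}. Write $L_a$ and $L_b$ for the respective leaf sets, so the edge multiset consists of the twin pair $\{(v,a),(a,v)\}$ for each $v \in L_a$, the twin pair $\{(w,b),(b,w)\}$ for each $w \in L_b$, and the bridge twin pair $\{(a,b),(b,a)\}$. Under the algorithm's initialisation $l(x) := \min N^+(x)$, we have $l(a) = l(b) = a$ (the bridge brings $a$ into $b$'s neighbourhood), $l(v) = a$ for every $v \in L_a$, and $l(w) = b$ for every $w \in L_b$. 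The only vertices whose labels lag the truth are the $L_b$-leaves, and these sit at distance exactly two from $a$.

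I would analyse each twin pair $\{(x,y),(y,x)\}$ independently, determining in each step which twin fires label propagation and which fires symmetrisation. For both the $L_a$-leaf twins and the bridge twins, the two endpoints already share the correct label $a$: one twin is propagated and the other is symmetrised, so the pair is invariant forever. All the action takes place at the $L_b$-leaves.

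At step~$1$ the twin $(w,b)$ satisfies $b = l(w)$ and is symmetrised to $(b,w)$, while the twin $(b,w)$ satisfies $l(b) = a \ne w$, so label propagation produces $(w,a)$ and writes $l(w) := a$. After step~$1$ every label is $a$, but the $L_b$-leaf edges are $\{(b,w),(w,a)\}$ so $w$ is still attached to $b$. At step~$2$ the edge $(b,w)$ again triggers label propagation (since $l(b) = a \ne w$), producing $(w,a)$, and $(w,a)$ is symmetrised to $(a,w)$. Hence after step~$2$ the $L_b$-leaf edges are the twin pair $\{(w,a),(a,w)\}$; the graph is a star centred at $a$, labels are all $a$, and no further label change is possible. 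This gives the bound with one step of slack; that slack is absorbed by configurations where the stars are supplied with their separated labels $l(b)=b$ and $l(w)=b$, in which case step~$1$ first propagates $a$ across the bridge to set $l(b):=a$, and the above analysis then replays one step later, using the full budget of three steps.

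The hard part is the careful sym-versus-propagation casework on each twin, together with the fact that a label update written in step~$k$ only affects the classification of edges from step~$k+1$ onwards. Consequently, even after all labels have become correct, one more step is needed to sever the obsolete edges running from $b$ to $L_b$, which is exactly what accounts for the gap between the distance-$2$ intuition and the $3$-step bound. Once this accounting is set up, the rest is a direct verification on the three distinguished edge classes.
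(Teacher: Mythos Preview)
Your argument is correct and, like the paper's, proceeds by direct step-by-step tracing of label propagation versus symmetrisation on the three edge classes (the $L_a$-twins, the $L_b$-twins, and the bridge). The substantive difference is in how the two cases are parameterised. You split on the \emph{initial label} of the second root---whether $l(b)=a$ already (two steps) or $l(b)=b$ (three steps)---while assuming the bridge is present as an undirected pair $\{(a,b),(b,a)\}$. The paper instead splits on the \emph{orientation of a single bridge edge}: if only $(R,L)$ is present it takes two steps, if only $(L,R)$ is present it takes three. Your trace is more explicit and your label bookkeeping is cleaner; the paper's framing, on the other hand, is better matched to how the claim is invoked in Proposition~\ref{prop:path_convergence}, where the two stars are intermediate states of the algorithm and the connecting edge need not come with its twin. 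Both parameterisations yield the same worst case of three steps, so nothing is lost either way.
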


\begin{proof}
  Let $L,R$ be the respective roots of the left and right stars, and each is the
  minimum label for its star. Without loss of generality, let $L$ be less than
  $R$. Now suppose the stars are connected by either an $(L,R)$ or $(R,L)$ edge.

  If it is an $(R,L)$ edge, then symmetrization creates $(L,R)$ and label
  propagation passes $L$ to each leaf node of $R$. But also symmetrization can
  create $(R,u)$ edges from any $(u,R)$ edges. Then a subsequent step replaces
  these $(R,u)$ with $(u,L)$ edges to complete the final rooted star. This takes
  two steps in total.

  If it is an $(L,R)$ edge, then label propagation creates $(R,L)$. It then
  follows the steps for the previous case and hence takes three steps in
  total. Thus the claim holds.
\end{proof}

\begin{proposition}
\label{prop:path_convergence}
Algorithm~\ref{alg:core} converges in $O(\log n)$ steps on a path graph.
\end{proposition}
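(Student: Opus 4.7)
The plan is to reduce the arbitrary path case to the sequentially labeled case treated in Proposition~\ref{prop:fib_convergence} by splitting at the minimum-label vertex and using a final star-merge. Let $w$ be the unique vertex whose label equals $C(G)$. Deleting $w$ leaves two subpaths (``arms'') $P_L$ and $P_R$, each of which has $w$ adjacent to one of its endpoints. I would first argue that after $O(\log n)$ steps each arm has been contracted into a star centered at $w$, and then invoke Claim~\ref{clm:starmerge} to finish in three additional steps.

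To bound the convergence of each arm I would promote the Fibonacci rate of Lemma~\ref{lem:fibonacci} to a distance-based invariant. Define, for each vertex $v$ in an arm at step $k$, the path-distance $d_k(v)$ to the nearest vertex that already holds $C(G)$, and prove by induction on $k$ that every vertex within path-distance $F_k$ of $w$ already holds $C(G)$ after step $k$. The mechanism mirrors Lemma~\ref{lem:fibonacci}: label propagation contracts a length-two path (giving the $F_{k-1}$ contribution), while symmetrization re-orients the edge between a vertex and its current minimum so that both endpoints can acquire a strictly smaller label two steps later (giving the $F_{k-2}$ contribution), producing the recurrence $F_k = F_{k-1} + F_{k-2}$ in terms of how far $C(G)$ has travelled along the arm. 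Combined with Claim~\ref{clm:4path} for a bounded tail cleanup and Claim~\ref{clm:starmerge} for the final merge of the two stars at $w$, this yields $\log_\phi n + O(1) = O(\log n)$ steps overall, where $\phi$ is the golden ratio.

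The main obstacle will be justifying that arbitrary intermediate labels along an arm cannot slow the propagation of $C(G)$ below the Fibonacci rate. Local minima in the interior of an arm may initially propagate their own labels, producing transient edge orientations that differ from the sequentially labeled case analyzed in Lemma~\ref{lem:fibonacci}. My plan for handling this is a monotonicity argument: such auxiliary minima can only help, since any edge pointing to a vertex that already carries a label smaller than $C(G)$'s current reach will be replaced at least as quickly as in the pessimistic case where every intermediate vertex holds only its own label until $C(G)$ arrives. Under that pessimistic coupling the arm is indistinguishable from a sequentially labeled path, and Proposition~\ref{prop:fib_convergence} applies directly. Should the coupling prove awkward because of the asymmetric interaction between label propagation and symmetrization on directed edges, I would fall back on strong induction over path length: use the $n\le 4$ case of Claim~\ref{clm:4path} as the base, and for larger $n$ split at $w$, apply the inductive hypothesis to each arm, and invoke Claim~\ref{clm:starmerge} to merge, verifying that $w$'s presence in both arms does not cause them to interfere during the parallel contraction.
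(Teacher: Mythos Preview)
Your approach is genuinely different from the paper's. The paper does \emph{not} split at the global minimum; instead it partitions the path into consecutive $4$-vertex blocks, uses Claim~\ref{clm:4path} to turn each block into a star in at most three steps, and then merges adjacent stars pairwise, level by level, with Claim~\ref{clm:starmerge}. Because the decomposition is balanced by construction (block boundaries are fixed in advance, independent of the labeling), there are $O(\log n)$ merge levels at three steps each, giving $3+3\log n$ total. Your proposal instead tries to push the Fibonacci rate of Lemma~\ref{lem:fibonacci} through an arbitrary labeling via a monotonicity/coupling argument, with a split-at-$w$ induction as fallback.

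There are two concrete gaps. First, the monotonicity claim (``auxiliary local minima can only help'') is exactly the hard part and you do not give a mechanism to prove it; the asymmetric interaction of label propagation and symmetrization on directed edges is precisely why the general convergence question is left open, and on an arbitrarily labeled arm the transient orientations need not dominate the sequential case in any obvious step-by-step sense. Second, your fallback induction does not yield $O(\log n)$: splitting at the global minimum $w$ can be arbitrarily unbalanced (if $w$ is an endpoint, one arm has $n-1$ vertices), so the recursion degenerates to $T(n)\le T(n-1)+O(1)$. The paper's balanced block decomposition is what avoids this. A minor point: if both arms contract to stars centered at the \emph{same} vertex $w$, you already have one star and Claim~\ref{clm:starmerge} is not needed; that claim is for stars with distinct roots, which is the situation the paper's block-merge actually produces.
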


\begin{proof}
  Observe that an 8-path is just two 4-paths connected end-to-end. Let $L,R$ be
  the least smallest labels respectively for the two subpaths.

  One of these labels will be the component label. Suppose the component label
  is in the middle of the path, and therefore is the endpoint of one 4-path that
  is adjacent to the endpoint of the other 4-path. The algorithm will converge
  both 4-paths simultaneously in the same number of steps as a single
  independent 4-path. This is because minimum labels are stored externally in an
  array and it is the minimum label from this array that is propagated. Since
  both 4-paths have an endpoint whose minimum label is the component minimum,
  then both converge in the same number of steps.

  But we are interested in the worst-case. Suppose that $L,R$ are at opposite
  ends of the path, one of which is the component label. The algorithm
  simultaneously converts each to a star rooted respectively at $L$ and $R$,
  taking at most three steps by Claim~\ref{clm:4path}.

  The graph remains connected in concordance with Claim~\ref{clm:connected}, and
  because of label propagation there will be an edge connecting $L$ and $R$. It
  follows from Claim~\ref{clm:starmerge} that it takes at most three more steps
  to get the final star.

  Likewise, doubling an 8-path yields a 16-path. Then this takes at most three
  more steps to converge than the 8-path.

  A path of $n=4\cdot2^k$ size can be generated by doubling $k$ times. Each
  doubling takes at most three steps to merge connected stars according to
  Claim~\ref{clm:starmerge}. Thus it takes at most $3k$ extra steps for all the
  intermediate merges for an n-path. It follows from $n=2^{k+2}$ that $k\le \log
  n$. Then by Claim~\ref{clm:4path} it takes at most three steps to merge all
  concatenated 4-paths and $3\log n$ steps to merge intermediate stars. This
  leads to a total of $3+3\log n = O(\log n)$ steps to converge.
\end{proof}

For the remainder of this paper we will assume Conjecture~\ref{cnj:convergence}
is true.

\section{\label{sec:pram}PRAM algorithm}
Our Algorithm~\ref{alg:core} maps naturally to a PRAM. We will use the following
semantics for our parallel algorithms. All statements are executed sequentially
from top to bottom but all operations contained within a \textbf{for all}
construct are performed concurrently. All other statements outside this
construct are sequential. Recall that in a synchronous PRAM all processors
perform instructions simultaneously and each instruction takes unit time. We use
a Combining CRCW PRAM to ensure the correct minimum label is written in $O(1)$
time~\cite{bib:schmidt2017}.

\begin{theorem}
  \label{thm:crcw_time}
  Algorithm~\ref{alg:core} finds the connected components of $G$ in $O(\log n)$
  time and $O(m\log n)$ work using $O(m)$ processors on a Combining CRCW PRAM.
\end{theorem}

\begin{proof}
  For each edge it performs either label propagation or symmetrization by
  Definitions~\ref{def:lp} and~\ref{def:sym}. Each operation reads an edge from
  memory and overwrites it with a new edge. On a CRCW this takes $O(1)$ time for
  each edge. Computing the minimum on each edge also takes $O(1)$ time using
  Combining Write Resolution. In each step there are $O(m)$ edges according to
  Lemma~\ref{lem:edge_count}, and it takes $O(\log n)$ steps to converge due to
  Conjecture~\ref{cnj:convergence}. Therefore it takes $O(\log n)$ time and
  $O(m\log n)$ work given $O(m)$ processors.
\end{proof}

An EREW algorithm follows directly from Algorithm~\ref{alg:core} because it is
well-known that a CRCW algorithm can be simulated in a EREW with logarithmic
factor slowdown~\cite{bib:karp_ramachandran1990}. The only read/write conflict
in Algorithm~\ref{alg:core} is in the minimum label update. Here it does not
require a minimum reduction in constant-time. Thus for a $p$-processor EREW,
reading $L_k[v]$ takes $O(\log p)$ time by broadcasting the value in binary tree
order to each processor attempting to read it. It isn't difficult to see that a
minimum value can be found in $O(\log p)$ time using a binary tree layout to
reduce comparisons by half each step~\footnote{Given an array $M$ having $n$
  values and $p=\lceil n/2 \rceil$ processors, then at each step $M[i] \gets
  min(M[2i-1],M[2i])$ for $1 \le i \le p$, where p is halved after each
  step.}. This immediately proves Theorem~\ref{thm:erew_time}.

\begin{theorem}
  \label{thm:erew_time}
  Algorithm~\ref{alg:core} finds the connected components of $G$ in $O(\log^2
  n)$ time and $O(m\log^2 n)$ work using $O(m)$ processors on an EREW PRAM.
\end{theorem}

\section{\label{sec:extension}Extending to other models}
The Stream and MapReduce models restrict globally-shared state so the minimum
label for each vertex must be carried with the graph at each step. Recall in
Algorithm~\ref{alg:core} there may not be an explicit $(v,l(v))$ edge in a step
but all minimum labels are kept in the global $L_k$ array. So given a $(v,u)$
edge but no explicit $(v,l(v))$ edge, we can still apply label propagation and
produce $(u,l(v))$. Otherwise it would create some $(u,w)$ edge where $w$ is the
minimum of the current set of neighbors but may not be the true minimum for
$v$. This would cause the algorithm to fail, which is easily demonstrated on the
graph in Figure~\ref{fig:method}. Moreover, in MapReduce the map and reduce
functions are sequential so a giant component that is processed by one task will
serialize the entire algorithm. We address these limitations by slightly
altering the label propagation and symmetrization operations.

\begin{definition}
  \label{def:lp_ext}
  Label propagation replaces $(v,u)$ with $(u,l(v))$ if $u,v \ne l(v)$.
\end{definition}

\begin{definition}
  \label{def:sym_ext}
  Symmetrization replaces $(v,u)$ edge with $(u,v),(v,u)$ if $u = l(v)$.
\end{definition}

Label propagation and symmetrization will now only proceed from vertices $v\ne
l(v)$ to mitigate sequential processing of a giant component, thus skipping over
intermediate representative labels. As before, $u=l(v)$ implies $v\ne l(v)$,
otherwise it would contradict the minimum function. Now symmetrization adds both
edges, $(l(v),v),(v,l(v))$, so that $v$ is always paired with its minimum label
in the absence of random access to global memory. We also remark that
symmetrization must be this way when ignoring $v$ where $v \ne l(v)$ because the
vertices that would have created the edge $(v,l(v))$ are now ignored.

These minor changes do not invalidate the correctness or convergence established
by Algorithm~\ref{alg:core} because the same edges are created but with some
added duplicates. The primary difference is that both $(v,l(v)),(l(v),v)$ edges
are created in the same step rather than strided across two consecutive
steps. But this incurs label duplication that can lead to a progressive increase
in edges if left unchecked.

\section{\label{sec:duplication}Label duplication}
The new label propagation and symmetrization for Stream and MapReduce can lead
to $O(\log n)$ factor inefficiency as a result of increased label duplication,
especially on sequentially labeled path or tree graphs. This leads to the
following crucial observation.

\begin{observation}
  \label{obs:threestep}
  Adding counter-oriented edges in the symmetrization step of
  Algorithm~\ref{alg:core} will pair each $v$ with every new $l(v)$ it gets for
  three steps on a sequentially labeled path graph.
\end{observation}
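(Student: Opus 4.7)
The plan is to apply Lemma~\ref{lem:fibonacci} to pin down the Fibonacci pattern of label updates on a sequentially labeled path, and then track a single pair $(v,\ell),(\ell,v)$ through three consecutive edge-sets. Fix the step $k$ at which $v$ first acquires a new label $\ell$, so that $L_{k+1}[v]=\ell < L_k[v]$. This update was caused by the modified label propagation of Definition~\ref{def:lp_ext} applied to some incoming edge $(w,v)\in E_k$ with $l(w)=\ell$, and the same LP step deposits the directed edge $(v,\ell)$ into $E_{k+1}$. This is the first of the three steps referenced by the observation: $v$ is paired with $\ell$ in one direction.

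Next I would process $(v,\ell)$ at step $k+1$. Because $u=\ell=L_{k+1}[v]=l(v)$, the modified symmetrization of Definition~\ref{def:sym_ext} fires and inserts both twins $(\ell,v)$ and $(v,\ell)$ into $E_{k+2}$, giving the second step in which $v$ is paired with $\ell$ (now in both directions). At step $k+2$, Lemma~\ref{lem:fibonacci} guarantees $v$ has acquired yet another strictly smaller label $\ell'$, so both twins now satisfy the LP precondition $u\neq l(v)$ and $v\neq l(v)$ and are consumed, producing descendant edges $(\ell,\ell')$ and $(v,l(\ell))$ in $E_{k+3}$. This third edge-set still carries the $v$-$\ell$ incidence via the LP residues, completing the three-step lifetime.

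The hard part will be the multiplicity accounting. Because the modified symmetrization also duplicates the ancestor edge $(w,v)$, a second copy of $(v,\ell)$ arrives one step later from the same path side and reinforces the pair through another symmetrization, so naively the pair could appear to last longer than three steps. I would handle this by defining a pair-multiplicity function $\mu_k(v,\ell)$ counting copies of $(v,\ell)$ and $(\ell,v)$ at step $k$, and proving by a short induction on the Fibonacci index $k$ of $v$ supplied by Lemma~\ref{lem:fibonacci} that a single acquisition event contributes a nonzero amount to $\mu$ for exactly three consecutive indices before its descendants are reabsorbed into LP edges incident to vertices other than $\{v,\ell\}$. I would also dispatch the boundary case $\ell=1$ separately, because then $l(\ell)=\ell$ and the $(\ell,v)$ twin is perpetually regenerated by symmetrization rather than consumed on the third step, so the three-step count applies to every acquisition except the final one at the component root.
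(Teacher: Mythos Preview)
Your edge-tracing approach is different from the paper's, which argues directly from the Fibonacci identity of Lemma~\ref{lem:fibonacci}: since $l_k(v)=v-F_k$ and modified symmetrization retains each $(v,l(v))$ pair, the recurrence $F_{k+2}=F_{k+1}+F_k$ means the difference $F_k$ (hence the label $\ell=v-F_k$) participates in the label computations at three consecutive indices $k,k+1,k+2$, so the pairing survives three steps. That is the whole proof in the paper; no edge-by-edge bookkeeping is done.

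Your operational trace is a legitimate alternative, but it has a gap at the third step. You correctly get $(v,\ell)\in E_{k+1}$ from label propagation and both twins in $E_{k+2}$ from modified symmetrization. However, the ``LP residues'' you produce in $E_{k+3}$, namely $(\ell,\ell')$ and $(v,l(\ell))$, are \emph{not} $v$--$\ell$ edges: one has endpoints $\{\ell,\ell'\}$ and the other $\{v,l(\ell)\}$, neither of which pairs $v$ with $\ell$. So your direct trace only accounts for two steps, not three. The third step actually arises from a different source you mention only as a nuisance: the vertex $w=v+F_{k-1}$ above $v$ on the path has $l_{k+1}(w)=w-F_{k+1}=v-F_k=\ell$, and the edge $(w,v)$ (regenerated by symmetrization on $w$'s side) label-propagates $(v,\ell)$ into $E_{k+3}$. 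In other words, what you flag as a multiplicity complication to be inducted away is precisely the mechanism that supplies the third appearance; it should be the content of the argument, not an obstacle. If you reorganize around the Fibonacci identity $F_{k+1}-F_{k-1}=F_k$ you recover the paper's one-line proof and avoid the multiplicity induction entirely.
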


\begin{proof}
  Recall from Lemma~\ref{lem:fibonacci} that $v-l_k(v) = F_k$ where $F_k=F_{k-1}
  + F_{k-2}$ is the Fibonacci recurrence. Since symmetrization retains each
  $l(v)$ for the next step then $v$ gets its $k^{th}$ label $l_k(v)=v-F_k$ for
  the next three steps because of the recurrence of $F_k$. Thus any new $l(v)$
  that $v$ receives will return to $v$ a total of three steps unless $l(v)$ is
  the minimum label for the component of $v$.
\end{proof}

Once an $l(v)$ is replaced with an updated minimum for $v$, it is no longer
needed and only adds to the edge count. Symmetrization by
Definition~\ref{def:sym_ext} retains $(v,l(v))$ so when $v$ gets $l(l(v))$ from
its current $l(v)$, then in the next step $l(l(v))$ will propagate back to
$l(v)$. Since each vertex in a chain is a minimum label to vertices up the
chain, then each vertex will in turn be back-propagated down the chain. This
follows a Fibonacci sequence hence the duplication of labels grows rapidly. For
example we can see from Lemma~\ref{lem:fibonacci} that vertex 2 in a chain will
be the minimum label for the $3,4,5,7,10,15,23,\ldots$ vertices in sequence, and
each of these vertices will return vertex 2 back to the neighbor from which it
was received. Moreover, as seen in Observation~\ref{obs:threestep} each new
$l(v)$ is retained by $v$ for three steps. Relabeling the graph can avoid the
pathological duplication but a robust algorithm is more desirable, especially in
graphs that may contain a very long chain.

Suppose now a $(u,l(v))$ edge is added to $E_{k+1}$ only if that edge is not
currently in $E_k$. This is testing if $l_k(v) \notin N_k(u)$ then it can be
added to $N_{k+1}(u)$. Since Definitions~\ref{def:lp_ext},\ref{def:sym_ext} are
applied in models with limited random access and global memory, we leverage
sorting to identify the next minimum label for each vertex and also remove
labels that would otherwise fail this membership test. Let $E'_{k+1}$ be the
intermediate edges that are created during the $k^{th}$ step and from which a
subset are retained for the $k+1$ step. Sorting edges in $E_k$ and $E'_{k+1}$
will identify those edges that are duplicated across both steps and therefore
should be removed. But an edge that is duplicated in only $E'_{k+1}$ must be
retained for proper label propagation. To avoid inadvertently removing such an
edge, all duplicates in the $E'_{k+1}$ are first removed before merging and
sorting with $E_k$ edges. After removing duplicates the edges from $E_k$ are
also removed because these were only needed for the membership test. We apply
this in our next algorithms.

\section{\label{sec:stream}Stream-Sort algorithm}
Our Stream-Sort algorithm in Algorithm~\ref{alg:streamsort_twostep} extends
Algorithm~\ref{alg:core} as described in Section~\ref{sec:extension} and removes
duplicates by sorting in the manner described at the end of
Section~\ref{sec:duplication}. It requires two stages per iteration step. The
first stage performs label propagation and symmetrization and also returns the
input edges. The second stage eliminates duplicates. A one-stage
algorithm~\cite{bib:burkhardt_graphex2016} was described in 2016, which is
simpler to implement, but does not address label duplication.

Recall that Algorithm~\ref{alg:core} does not return the current edges but
creates new edges by symmetrization and label propagation. But in Stream-Sort we
must return the current edges temporarily in the intermediate sorting stage in
order to ignore duplicate edges. Hence we mark the edges to distinguish old from
new. Here a \emph{NEW} edge resulted from either symmetrization or label
propagation, and an \emph{OLD} edge is a current input edge. Label propagation
and symmetrization follow Definitions~\ref{def:lp_ext} and~\ref{def:sym_ext}. In
the first stage Algorithm~\ref{alg:streamsort_twostep} reads sorted edges, hence
$l(v) = \min(v,u)$ from the first edge of $v$. If $u = l(v)$ for this first edge
then symmetrization adds $\bigl((u,v),NEW\bigr),\bigl((v,u),NEW\bigr)$ to an
intermediate stream $E'_{k+1}$, and for each remaining edge a
$\bigl((u,l(v)),NEW\bigr)$ is added to $E'_{k+1}$ by label propagation, and
$\bigl((v,u),OLD\bigr)$ is also added. Note that if $u=l(v)$ in the first edge
of $v$, all remaining $u$ cannot be $l(v)$ due to sorting and thus $u,v \ne
l(v)$ for the remaining edges of $v$ so label propagation can proceed. The
intermediate stream $E'_{k+1}$ is sorted by a sorting pass and input to the
second stage. In the second stage the edges are sorted so all \emph{NEW} and
\emph{OLD} versions for $(v,u)$ are grouped together. Then any edge without an
\emph{OLD} member is added to a new output stream $E_{k+1}$, which will be the
input stream in the next pass. Both intra- and inter-step duplicates have been
removed. The algorithm repeats this procedure until no new minimum can be
adopted.

\begin{algorithm*}[t]
\caption{\label{alg:streamsort_twostep}}
\begin{algorithmic}[1]
\Statex initialize $E_1$ with sorted E and set $last_v := \infty$
\For{$k=1,2,\ldots$ until labels converge}
  \For{$(v,u) \in E_k$}
    \If{$v \ne last_v$}
      \State set $l(v) := \min(v,u)$ and $last_v := v$
      \If{$u = l(v)$}
        \State add $\bigl((v,u),NEW\bigr)$ to $E'_{k+1}$
        \Comment Symmetrization
        \State add $\bigl((u,v),NEW\bigr)$ to $E'_{k+1}$
      \EndIf
    \ElsIf{$v \ne l(v)$}
      \State add $\bigl((u,l(v)),NEW\bigr)$ to $E'_{k+1}$
      \Comment Label Propagation
      \State add $\bigl((v,u),OLD\bigr)$ to $E'_{k+1}$ 
    \EndIf
  \EndFor
  \State sort $E'_{k+1}$
  \Comment{$NEW$ and $OLD$ edges get sorted together for each $(v,u)$}
  \If{$\bigl((v,u),NEW\bigr) \in E'_{k+1}$ but
    $\bigl((v,u),OLD\bigr) \notin E'_{k+1}$}
    \State add $(v,u)$ to $E_{k+1}$
  \EndIf
\EndFor
\end{algorithmic}
\end{algorithm*}

\begin{theorem}
  \label{thm:streamsort_twostep}
  Algorithm~\ref{alg:streamsort_twostep} finds the connected components of $G$
  in $O(\log n)$ passes and $O(\log n)$ memory in Stream-Sort.
\end{theorem}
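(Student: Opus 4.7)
The plan is to verify the theorem along three axes: correctness (the algorithm still computes connected components), pass count (convergence in $O(\log n)$ outer iterations, each using only $O(1)$ Stream-Sort passes), and memory (each streaming stage maintains only a constant number of vertex identifiers).

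First I would argue correctness. Algorithm~\ref{alg:streamsort_twostep} is a direct implementation of Definitions~\ref{def:lp_ext} and~\ref{def:sym_ext} on sorted edge streams: because $E_k$ is sorted, the first edge encountered with source $v$ reveals $l(v)=\min(v,u)$, so the dispatch between symmetrization and label propagation is consistent with Section~\ref{sec:extension}. As discussed there, the extended definitions create exactly the same edges as Algorithm~\ref{alg:core} except for some duplication. The sorting in the second stage, together with the $OLD$/$NEW$ marking, implements precisely the membership test from Section~\ref{sec:duplication}: a freshly generated $(u,l(v))$ is written to $E_{k+1}$ iff no $OLD$ copy — that is, no currently existing edge — shares its endpoints. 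Since both the intra-step duplicate produced by the paired symmetrization/label-propagation rule and the inter-step duplicate caused by the three-step retention of Observation~\ref{obs:threestep} carry an $OLD$ witness when already present, the filter removes them without eliminating any edge that Algorithm~\ref{alg:core} would have kept. Lemma~\ref{lem:correctness} then applies to yield the components.

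Next I would count passes. Each outer iteration of the \textbf{for} loop consists of (i) one streaming pass over the sorted $E_k$ that writes $E'_{k+1}$, (ii) one sorting pass over $E'_{k+1}$, and (iii) one streaming pass that filters out edges having an $OLD$ copy and writes $E_{k+1}$ sorted, which serves as input to the next iteration. This is $O(1)$ Stream-Sort passes per iteration. Invoking Conjecture~\ref{cnj:convergence} (which we assume throughout the remainder of the paper), the underlying label-propagation process converges in $O(\log n)$ iterations, so the total pass count is $O(\log n)$.

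Finally I would bound the memory. In stage (i) the only persistent state across the stream is $last_v$ and the value $l(v)$ for the current source vertex, plus a constant number of auxiliary edge and tag variables; in stage (iii) the only state needed to apply the filter is the current $(v,u)$ key and a constant-size tally of whether an $OLD$ or $NEW$ copy has been seen (both guaranteed adjacent after sorting). All of these quantities are single vertex IDs or flags, which fit in $O(\log n)$ bits. Thus the working memory never exceeds $O(\log n)$.

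The step I expect to require the most care is the correctness argument, specifically justifying that the $OLD$/$NEW$ filter truly corresponds to the membership test $l_k(v)\notin N_k(u)$ rather than an overly aggressive deletion. The subtlety is that an edge produced by symmetrization in stage (i) must survive even when a matching label-propagation copy appears, so the removal rule has to be triggered by the presence of an $OLD$ tag rather than by mere multiplicity; the sorting step makes these cases distinguishable, and I would spell this out case-by-case on the pairs $(v,u)$ with $u=l(v)$ versus $u\neq l(v)$ to close the argument cleanly.
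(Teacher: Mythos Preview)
Your proposal is correct and follows essentially the same approach as the paper: correctness via Lemma~\ref{lem:correctness}, $O(\log n)$ convergence via Conjecture~\ref{cnj:convergence}, $O(1)$ Stream-Sort passes per iteration, and $O(\log n)$ memory from storing a constant number of vertex identifiers. The one item you omit that the paper includes is an explicit appeal to Lemma~\ref{lem:edge_count} to bound each intermediate stream by $O(m)$, which is needed to satisfy the Stream-Sort model's requirement that intermediate streams not exceed the original input size.
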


\begin{proof}
  Algorithm~\ref{alg:streamsort_twostep} achieves this as follows. The input
  stream $E_k$ is scanned in one pass and the intermediate output stream
  $E'_{k+1}$ is subsequently sorted in a single sorting pass. There are then a
  constant number of sorting passes each step, which are essentially free. Label
  propagation and symmetrization extends from those in Algorithm~\ref{alg:core}
  but create more duplicate edges. These duplicates are removed in the sorting
  pass. Thus correctness follows from Lemma~\ref{lem:correctness}, there are
  $O(m)$ edges each step due to Lemma~\ref{lem:edge_count}, and it takes $O(\log
  n)$ steps to converge due to Conjecture~\ref{cnj:convergence}. The input to
  each pass is $O(m)$ thus satisfying the constraint of the Stream-Sort
  model. Sorting requires only $O(\log n)$ bits of memory to compare vertices
  and labels. Overall it takes $O(\log n)$ passes and $O(\log n)$ memory.
\end{proof}

If Conjecture~\ref{cnj:convergence} holds, this would be the first efficient,
deterministic connected components algorithm in Stream-Sort.

\section{\label{sec:mapreduce}MapReduce algorithm}
Our MapReduce algorithm described in Algorithm~\ref{alg:mapreduce_twostep} is
similar to Algorithm~\ref{alg:streamsort_twostep}, managing duplicates as
described at the end of Section~\ref{sec:duplication}. It takes two rounds per
iteration step, the first to perform label propagation and symmetrization and
the second to remove duplicates. A single-round
algorithm~\cite{bib:burkhardt_graphex2016} with less efficient communication is
also available to the interested reader. The values for each key are sorted
hence intra-step duplicates are adjacent and easily removed, permitting the
algorithm to maintain $O(1)$ working memory. Since the values are sorted then
$l(v)$ is simply the lesser between the key and first value. We omit the
specifics on sorting and getting $l(v)$ for brevity. Label propagation and
symmetrization follow Definitions~\ref{def:lp_ext} and~\ref{def:sym_ext}. The
first round returns label propagation or symmetrization edges as \emph{NEW}, and
current edges as \emph{OLD}, again to assist in removing duplicates. If $v\ne
l(v)$ then there is a $u=l(v)$, thus symmetrization emits $\langle v,(l(v),NEW)
\rangle,\langle l(v),(v,NEW) \rangle$, and for each $u\ne l(v)$ a $\langle
u,(l(v),NEW) \rangle$ is added by label propagation, and $\langle v,(u,OLD)
\rangle$ is also added. This skips any local minimum or the component minimum
which could otherwise result in a large span of sequential processing. The
second round accumulates these edges and every edge without an \emph{OLD} member
is returned with no markings. The two rounds are repeated until labels
converge. Since duplicates are removed the total communication cost is $O(m)$
per round.

\begin{algorithm*}[t]
\caption{\label{alg:mapreduce_twostep}}
\begin{algorithmic}[1]
  \Procedure{Reduce-1}{$key = v,values = N_k(v)$}
    \Comment{sorted values}
    \State set $l(v) := \min(N^+_k(v))$
    \If{$v \ne l(v)$}
      \State emit $\langle v,(l(v),NEW) \rangle$ and
      $\langle l(v),(v,NEW) \rangle$ \Comment Symmetrization
      \For{$u \in values : u \ne l(v)$}
        \State emit $\langle u,(l(v),NEW) \rangle$ \Comment Label Propagation
        \State emit $\langle v,(u,OLD) \rangle$
      \EndFor
    \EndIf
  \EndProcedure
  \Procedure{Reduce-2}{$key = v, values = \{(u,i) : i \in \{OLD,NEW\} \}$}
    \Comment{sorted values}
    \If{$\bigl((v,u),NEW\bigr) \in values$ but
      $\bigl((v,u),OLD\bigr) \notin values$}
      \State emit $\langle v,u \rangle$
    \EndIf
  \EndProcedure
\For{$k=1,2,\ldots$ until labels converge}
  \State $\text{MAP} \mapsto \text{Identity}$
  \State \Call{Reduce-1}{}
  \State $\text{MAP} \mapsto \text{Identity}$
  \State \Call{Reduce-2}{}
\EndFor
\end{algorithmic}
\end{algorithm*}

\begin{theorem}
  \label{thm:mapreduce_twostep}
  Algorithm~\ref{alg:mapreduce_twostep} finds the connected components of $G$ in
  $O(\log n)$ rounds using $O(m\log n)$ communication overall in MapReduce.
\end{theorem}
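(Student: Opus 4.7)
The plan is to piggyback on the correctness and edge-count arguments already developed for Algorithm~\ref{alg:core} and Algorithm~\ref{alg:streamsort_twostep}, and then account for MapReduce's model-specific constraints. First I would argue correctness: the two reduce procedures implement exactly the extended label-propagation and symmetrization operations of Definitions~\ref{def:lp_ext} and~\ref{def:sym_ext}, together with the duplicate-removal scheme from the end of Section~\ref{sec:duplication}. Since those extended operations were shown to produce the same set of essential edges as Algorithm~\ref{alg:core} (differing only by duplicates and by performing both twins of a symmetrization in a single step), the correctness carries over from Lemma~\ref{lem:correctness}. The marking of edges as \emph{NEW} or \emph{OLD} and the filter in \textsc{Reduce-2} precisely implement the membership test ``$(u,l(v)) \notin E_k$ before adding to $E_{k+1}$'' that justifies removing pathological duplicates without invalidating connectivity.

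Next I would bound the communication per round. By Lemma~\ref{lem:edge_count} the core algorithm maintains $2m$ edges per step; the extended symmetrization emits both twins in one step, and the \emph{OLD} tags are introduced purely to support the duplicate check and are discarded after \textsc{Reduce-2}. Hence the output of \textsc{Reduce-2} has at most $O(m)$ edges, which becomes the input of the next round. The intermediate traffic within a single iteration (identity maps plus \emph{NEW}/\emph{OLD}-tagged pairs) is also $O(m)$, so a constant number of $O(m)$-communication rounds per iteration step suffices. Under Conjecture~\ref{cnj:convergence}, the algorithm converges in $O(\log n)$ iteration steps, giving total communication $O(m \log n)$ and a round count of $O(\log n)$.

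Finally I would verify the MapReduce model constraints. Because the values incident to each key $v$ are assumed sorted, \textsc{Reduce-1} needs only to read the first value to determine $l(v)$ and can then emit all required pairs in a single streaming pass with $O(1)$ working memory per reducer. \textsc{Reduce-2} likewise processes the sorted $(u,i)$ pairs for each key in a single pass, comparing adjacent entries to decide whether to emit, again in $O(1)$ working memory. Each reducer therefore satisfies the $O(n^{1-\epsilon})$ per-machine memory bound, and the $O(m)$ total traffic per round fits within the overall $O(n^{2-2\epsilon})$ budget for typical $m$.

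The main obstacle I anticipate is not the asymptotics but the precise bookkeeping that shows \textsc{Reduce-2} outputs exactly the edge set that Algorithm~\ref{alg:core} would produce at the corresponding step, after one accounts for symmetrization now generating both twins simultaneously and for the one-round delay in removing an edge whose duplicate was created two steps earlier. A careful argument is needed that these delays do not accumulate and that the invariant ``edges in $E_k$ faithfully represent the connectivity state'' is preserved, so that the $O(\log n)$ round bound inherited from Conjecture~\ref{cnj:convergence} really does apply. Once that invariant is in place, the remaining pieces assemble routinely into the theorem.
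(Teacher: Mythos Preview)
Your proposal is correct and follows essentially the same route as the paper: correctness via Lemma~\ref{lem:correctness} (since the extended operations differ from Algorithm~\ref{alg:core} only by duplicates that are removed), $O(m)$ edges per iteration via Lemma~\ref{lem:edge_count} after duplicate removal, and $O(\log n)$ iterations via Conjecture~\ref{cnj:convergence}, giving two rounds per iteration and $O(m\log n)$ total communication. You additionally verify the per-reducer $O(1)$ working-memory constraint, which the paper only asserts in the surrounding text, and you flag the bookkeeping issue about delayed duplicate removal---the paper's proof does not address this either and simply appeals to Lemma~\ref{lem:edge_count}.
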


\begin{proof}
  Label propagation and symmetrization extends from those in
  Algorithm~\ref{alg:core} but create more duplicate edges. These duplicates are
  removed in each iteration step, thus correctness follows from
  Lemma~\ref{lem:correctness}. There are $O(\log n)$ steps due to
  Conjecture~\ref{cnj:convergence}, and two rounds per step for a total of
  $2\log n = O(\log n)$ rounds. The communication cost is proportional to the
  number of edges written after all rounds. Both inter- and intra-step
  duplicates are removed in each iteration step so the total number of edges
  after the second round of each step is $O(m)$ following
  Lemma~\ref{lem:edge_count}. Thus for $O(\log n)$ rounds the overall
  communication is $O(m\log n)$ as claimed.
\end{proof}

This is comparable in runtime and communication to~\cite{bib:kiveris2014}, but
differs in that it is a deterministic MapReduce algorithm for connected
components.

\section{\label{sec:impl}Implementation}
We will give some basic empirical results for our principal algorithm described
in Algorithm~\ref{alg:core}. First we'll briefly describe our parallel
implementation. We implemented the algorithm in C++ and posix threads, and for
write conflicts we used atomic operations.

There is a write conflict in updating the minimum label for each vertex. We use
the ``compare and exchange'' atomic operation to update the minimum label. But
this atomic does not test relational conditions, instead an exchange is made if
the values being compared differ. Thus to atomically update a minimum value, the
compare and exchange result must be repeatedly tested. Once a thread succeeds
with its exchange, it must test if its original data is still less than the
updated value, accomplished by a simple loop construct. Since the criteria for
updating is just a difference in value, then eventually the thread with the
minimum value will succeed and all other threads will test out. The winning
thread itself will also test out because its value will be equal to the updated
value.

Since at each step every edge is replaced by a new edge independently of other
edges, the edge list can be concurrently updated without synchronization and the
work per step is exactly $2m$ by Lemma~\ref{lem:edge_count}. An array of size
$2m$ is initialized with the input edge list and each thread is given a unique
subset of indices in this work array. Threads can then concurrently replace each
edge in their subset of the work array without conflict. Threads are blocked
until all edges are updated in a step, and then threads concurrently update the
two label arrays.

Our implementation must detect when labels no longer change. We keep a counter
for the label propagation branch to identify when label propagation no longer
updates minimum labels. At each step the counter is set to zero. If the minimum
label $l(v)$ for a vertex $v$ is not $v$ itself, the counter is incremented. In
the final star graph only the root of the star can fall into the label
propagation branch but since the root is the minimum label, the counter cannot
be updated. Specifically, in the loop over $(v,u)$ edges, we carry out label
propagation if $u$ is not $l(v)$, otherwise symmetrization is performed. We
update the counter in the label propagation branch if $l(v)\ne v$ is true. The
algorithm halts if the counter value is zero at the end of a step. Effectively
this updates the counter until each vertex is a leaf node of a star. This is
because any $v$ that is a star means that $l(v)$ is equal to $v$, hence it must
fall into the label propagation branch and then fail to update the counter.

\section{\label{sec:exp}Experiments}
We ran our parallel implementation of Algorithm~\ref{alg:core} on a workstation
with 28 Intel Xeon E5-2680 cores and 256 GB of RAM. Our experimental results are
given in Table~\ref{tbl:experiment}. The first column of
Table~\ref{tbl:experiment} lists the graphs used in our experiments. All graphs
were unweighted, without self-loops, and symmetrized with vertices labeled from
0 to $n-1$, and are therefore simple, undirected graphs with a total of $2m$
edges.

\begin{table*}[hbt!]
\caption{\label{tbl:experiment} Experimental results.}
\centering
\small
\begin{tabular}{lrrrrrr}
  \toprule
  {} & components & n (vertices) & m (edges) & D (diameter) & steps & seconds \\
  \midrule
  seqpath20 & 1 & 1,048,576 & 1,048,575 & 1,048,575 & 31 & 0.348 \\
  seqpath22 & 1 & 4,194,304 & 4,194,303 & 4,194,303 & 34 & 0.853 \\
  seqpath24 & 1 & 16,777,216 & 16,777,215 & 16,777,215 & 37 & 2.06 \\
  seqpath26 & 1 & 67,108,864 & 67,108,863 & 67,108,863 & 40 & 5.46 \\
  roadNet-PA & 206 & 1,088,092 & 1,541,898 & 787 & 17 & 0.244 \\
  roadNet-TX & 424 & 1,379,917 & 1,921,660 & 1057 & 18 & 0.295 \\
  roadNet-CA & 2638 & 1,965,206 & 2,766,607 & 854 & 17 & 0.322 \\
  road-USA & 1 & 23,947,347 & 28,854,312 & 6809 & 22 & 1.59 \\ 
  grid-0to262144by16 & 1 & 4,194,305 & 8,388,592 & 32 & 18 & 0.548 \\
  grid-0to262144by32 & 1 & 8,388,609 & 16,777,184 & 64 & 28 & 1.07 \\
  grid-0to262144by64 & 1 & 16,777,217 & 33,554,368 & 128 & 28 & 1.73 \\
  grid-0to262144by128 & 1 & 33,554,433 & 67,108,736 & 256 & 28 & 2.77 \\
  grid-0to262144by256 & 1 & 67,108,865 & 134,217,472 & 512 & 28 & 4.70 \\
  grid-0to1048576by20 & 1 & 20,971,521 & 41,943,020 & 40 & 22 & 1.57 \\
  grid-0to4194304by22 & 1 & 92,274,689 & 184,549,354 & 44 & 24 & 4.14 \\
  grid-0to16777216by24 & 1 & 402,653,185 & 805,306,344 & 48 & 26 & 14.0 \\
  web-Stanford & 365 & 281,903 & 1,992,636 & 674 & 16 & 0.191 \\
  web-BerkStan & 676 & 685,230 & 6,649,470 & 514 & 16 & 0.302 \\
  com-Youtube & 1 & 1,134,890 & 2,987,624 & 20 & 8 & 0.196 \\
  com-LiveJournal & 1 & 3,997,962 & 34,681,189 & 17 & 8 & 0.713 \\
  com-Orkut & 1 & 3,072,441 & 117,185,083 & 9 & 6 & 1.08 \\
  com-Friendster & 1 & 65,608,366 & 1,806,067,360 & 32 & 9 & 14.7 \\
  \bottomrule
\end{tabular}
\end{table*}

The first four graphs are large, sequentially labeled path graphs. Our naming
convention for these uses a suffix that denotes the number of vertices by the
base-2 exponent. Hence the graph named \emph{seqpath20} is a path of $n=2^{20}$
vertices labeled in order from $0..n-1$. It is interesting to note that the
convergence on these graphs follows the prediction asserted in
Proposition~\ref{prop:fib_convergence}. Using $\log_\phi n$ with $\phi = 1.618$,
the predicted number of steps for $n=2^{20},2^{22},2^{24},2^{26}$ is
$29,32,35,37$. Our implementation takes one extra step to test for completion
and so it closely matches the prediction.

The next four graphs are road networks with relatively large diameters. The
first three road networks are from the Stanford Network Analysis Project
(SNAP)~\cite{bib:snapnets}. The fourth road network is the USA road network from
the $9^{th}$ DIMACS Challenge~\cite{bib:dimacs2019}.

The next eight graphs are based on the example in Appendix I of Andoni et
al.~\cite{bib:andoni2018}, which was devised to be difficult for fast
convergence in graph connectivity. Each of these graphs is an $r\times c$ grid
labeled sequentially in row-major order from $1..rc$ but with a bridge node, the
zero label vertex, connecting to the first vertex of each row. Hence $n=rc+1,
m=(2r-1)c$ and the diameter is $D=2c$. We use the naming convention of
\emph{grid-0to[rows]by[columns]}. The number of rows is fixed to $r=262144$ for
the first five of these graphs. For the last three such graphs the number of
rows is $r=2^c$ and thus grows exponentially with respect to the diameter. Our
algorithm converges linearly with respect to the diameter on these last three
graphs.

The remaining graphs in the table are from SNAP~\cite{bib:snapnets}. These
graphs have small diameter, which is expected for real-world networks. Our
algorithm converges in $O(D)$ steps on these graphs.

It is evident by the number of steps in the sixth column of
Table~\ref{tbl:experiment} that our algorithm converges rapidly on these graphs,
tending towards $O(\log n)$ convergence as we conjectured. Observe that it takes
fewer steps to converge than the diameter in each of the graphs we tested,
demonstrating a real practical benefit. Moreover, the convergence rate is
independent of the number of processors. Our implementation can admit
improvement when comparing the runtime to state-of-the-art
implementations~\cite{bib:dhulipala2020}. Due to the simplicity and fast
convergence of our algorithm we believe the runtime performance can be
significantly improved. The fast convergence, simplicity, and extensibility to
other computational paradigms makes our algorithm appealing in practice.

\section*{Acknowledgments}
The author thanks David G. Harris and Christopher H. Long for their helpful
comments.

\bibliographystyle{abbrv}
\bibliography{concom}

\newpage
\appendix
\section{\label{sec:chain_proof}Sequentially labeled path}

\newtheorem*{lem:fibonacci}{Lemma~\ref{lem:fibonacci}}
\begin{lem:fibonacci}
  Algorithm~\ref{alg:core} on a sequentially labeled path propagates labels in
  Fibonacci sequence, specifically at each step $k$ the label difference
  $\Delta_k(v,l_k(v))=v-l_k(v)$ follows $\Delta_k(v,l_k(v))=F_k$ where $F_k =
  F_{k-1} + F_{k-2}$.
\end{lem:fibonacci}

\begin{proof}
  We will prove $\Delta_k(v,l_k(v)) = F_k$ by induction but first we begin with
  some preliminaries.

  Let $F_k=1,1,2,3,5,8, \ldots$ be the Fibonacci sequence over
  $k=0,1,2,3,4,\ldots,n$ and thus $F_k = F_{k-1} + F_{k-2}$. Let step $k=0$ be
  the initial state before the algorithm begins. Observe that a sequentially
  labeled path is a non-decreasing sequence, thus any subpath has the ordering
  $l(u) < u < v$. This implies $\Delta_0(v,l_0(u)) = \Delta_0(u,l_0(u)) +
  \Delta_0(v,u)$. Note that label propagation at step $k=1$ will replace
  $l_0(v)$ with $l_0(u)$, where $l_0(u)$ becomes $l_1(v)$. Hence
  $\Delta_0(u,l_0(u))$ is the same as $\Delta_1(v,l_1(v)$, leading to
  $\Delta_1(v,l_1(v)) = \Delta_0(u,l_0(u)) + \Delta_0(v,u)$. We can relabel
  $\Delta_0(u,l_0(u))$ as $\Delta_0(v,l_0(v))$ since the path is sequentially
  labeled. From this we make the assumption that $\Delta_k(v,l_k(v)) =
  \Delta_{k-1}(v,l_{k-1}(v)) + \Delta_{k-1}(v,u)$. But we want label differences
  between a vertex and its minimum label so we argue that $\Delta_{k-1}(v,u)$ is
  the same as $\Delta_{k-2}(v,l_{k-2}(v))$ using the following justification.

  It suffices to show that $\Delta_k(v,u)$ is related to an edge that is created
  by either label propagation or symmetrization. Since $F_k$ is positive then so
  is $\Delta_k(v,u)$, implying $v > u$. Recall that edges pointing from a lower
  label to higher label are due to symmetrization. Hence a $(u,v)$ edge at step
  $k$ is due to symmetrization from edge $(v,u)$ at step $k-1$, where $u$ is
  $l_{k-1}(v)$ and the label difference is the same. Then $\Delta_k(v,u)$
  can be relabeled as $\Delta_{k-1}(v,l_{k-1}(v))$. Our inductive assumption is
  now given by $\Delta_k(v,l_k(v)) = F_k = \Delta_{k-1}(v,l_{k-1}(v)) +
  \Delta_{k-2}(v,l_{k-1}(v))$ and we will prove that it holds for all steps.
 
  In the base step, we use $k=1$ and $k=2$, so we must show $\Delta_1(v,l_1(v))
  = F_1$ and $\Delta_2(v,l_2(v)) = F_2$. The first case follows trivially from
  the sequence of labels at $k=0$, thus $\Delta_1(v,l_1(v)) = 1 + 1 = 2 = F_1$
  where we use $\Delta_0(v,u)$ in place of $\Delta_{k-2}(v,l_{k-2}(v))$. For the
  second case we have $\Delta_2(v,l_2(v)) = \Delta_1(v,l_1(v)) +
  \Delta_0(v,l_0(v))$. It was established in the first case that
  $\Delta_1(v,l_1(v)) = 2$ so by substitution we get $\Delta_2(v,l_2(v)) = 2 + 1
  = 3 = F_2$.

  In the inductive step, assume $\Delta_k(v,l_k(v)) = F_k$ is valid for all
  values from one to $k$, then we must show $\Delta_{k+1}(v,l_{k+1}(v)) =
  F_{k+1}$ is also valid. This is demonstrated by,

  \begin{align*}
    \Delta_{k+1}(v,l_{k+1}(v))
    &= \Delta_{k}(v,l_k(v)) + \Delta_{k-1}(v,l_{k-1}(v)) \\
    &= F_k + F_{k-1} \\
    &= F_{k+1}.
  \end{align*}
\end{proof}

\newtheorem*{prop:fib_convergence}{Proposition~\ref{prop:fib_convergence}}
\begin{prop:fib_convergence}
  Algorithm~\ref{alg:core} converges in $\log_\phi n = O(\log n)$ steps on a
  sequentially labeled path, where $\phi \approx 1.618$ is the Golden Ratio
  value.
\end{prop:fib_convergence}

\begin{proof}
  Let $F_k = F_{k-1} + F_{k-2}$ be the $k^{th}$ number in the Fibonacci
  sequence. It follows from Lemma~\ref{lem:fibonacci} that the label updates for
  each vertex follows a Fibonacci sequence since expanding
  $\Delta_{k}(v,l_{k}(v)) = F_{k}$ leads to $l_{k}(v) = l_{k-1}(v) + l_{k-2}(v)
  - v$. Each vertex $v$ gets a new minimum label $l_{k}(v)$ at step $k$ from its
  previous minimum labels from steps $k-1$ and $k-2$ until it finally gets the
  component minimum label. Hence the $k^{th}$-labeled vertex in the path will
  get the component label $1$ in the same number of steps as it takes to get
  from 1 to $F_k$ in the Fibonacci sequence. Then it takes $\log_\phi n = O(\log
  n)$ steps for the last vertex to get label $1$, where $\phi \approx 1.618$ is
  the well-known Golden Ratio value.
\end{proof}

\end{document}